\newtheorem{lemma}{Lemma}
\newcommand*{\rom}[1]{\expandafter\@slowromancap\romannumeral #1@}
\begin{document}

\title{Simultaneous Wireless Information and Power Transfer Under Different CSI Acquisition Schemes}

\author{
\authorblockN{Chen-Feng Liu, Marco Maso,~\IEEEmembership{Member,~IEEE,} Subhash Lakshminarayana,~\IEEEmembership{Member,~IEEE,} \\Chia-Han Lee,~\IEEEmembership{Member,~IEEE,} and Tony Q. S. Quek,~\IEEEmembership{Senior Member,~IEEE}}

\thanks{This work was supported in part by the SRG ISTD 2012037, SUTD-MIT International Design Centre under Grant IDSF1200106OH, and the SUTD-ZJU Research Collaboration Grant, and the Ministry of Science and Technology (MOST), Taiwan under Grant MOST103-2221-E-001-002.}

\thanks{C.-F. Liu and C.-H. Lee are with the Research Center for Information Technology Innovation, Academia Sinica, Taipei 115, Taiwan (email: cfliu@citi.sinica.edu.tw; chiahan@citi.sinica.edu.tw).}

\thanks{M. Maso is with the Mathematical and Algorithmic Sciences Lab, Huawei France Research Center, Boulogne-Billancourt 92100, France (email: marco.maso@huawei.com).}

\thanks{S. Lakshminarayana is with the Singapore University of Technology and Design, Singapore 138682 (email: subhash@sutd.edu.sg).}

\thanks{T. Q. S. Quek is with the Singapore University of Technology and Design, Singapore 138682, and also with the Institute for Infocomm Research, A*STAR, Singapore 117685 (email: tonyquek@sutd.edu.sg).}}

\maketitle

\begin{abstract}
In this work, we consider a multiple-input single-output system in which an access point (AP) performs a simultaneous wireless information and power transfer (SWIPT) to serve a user terminal (UT) that is not equipped with external power supply. In order to assess the efficacy of the SWIPT, we target a practically relevant scenario characterized by imperfect channel state information (CSI) at the transmitter, the presence of penalties associated to the CSI acquisition procedures, and non-zero power consumption for the operations performed by the UT, such as CSI estimation, uplink signaling and data decoding. We analyze three different cases for the CSI knowledge at the AP: no CSI, and imperfect CSI in case of time-division duplexing and frequency-division duplexing communications. Closed-form representations of the ergodic downlink rate and both the energy shortage and data outage probability are derived for the three cases. Additionally, analytic expressions for the ergodically optimal duration of power transfer and channel estimation/feedback phases are provided. Our numerical findings verify the correctness of our derivations, and also show the importance and benefits of CSI knowledge at the AP in SWIPT systems, albeit imperfect and acquired at the expense of the time available for the information transfer.
\end{abstract}

\begin{keywords}
Simultaneous wireless information and power transfer (SWIPT), energy harvesting, wireless power transfer, TDD, FDD, analog feedback.
\end{keywords}

\section{Introduction} \label{sec:introduction}

\IEEEPARstart{I}{n} conventional wireless systems, the limited battery capacity of mobile devices typically affects the overall network lifetime. Increasing the size of the battery might not be a feasible solution to address this problem, due to a consequent reduction of the portability and increase in the cost of the equipment. For these reasons, the study of novel techniques to prolong the lifetime of the battery has triggered an increased interest in the wireless communications community. In this context, the study of the so-called wireless power transfer (WPT) has recently gained prominence as means to implement a cable-less power transfer between devices \cite{WPTOverview}, either by resonant inductive coupling \cite{Kurs06072007} or by far-field power transfer \cite{art:brown84}. The latter approach has seen increasing momentum in recent years, due to its promising potential for longer range transfers. A fundamental breakthrough in this context has been the design of rectifying antennas (rectennas) for microwave power transfer (MPT), key component to achieve an efficient radio frequency to direct current (RF-to-DC) conversion. This has brought to several technological advances, e.g., the design of flying vehicles powered solely by microwave \cite{conf:Schlesak88}, which confirmed that RF-to-DC conversion can not only be performed but also achieve remarkable efficiency. In this regards, commercial products exhibiting an efficiency larger than $50\%$ are already available on the market \cite{prod:P2110}. Remarkably, performance of state-of-the-art RF-to-DC converters can be even higher, i.e., more larger than $80\%$, resulting in an impressive potential DC-to-DC efficiency of $45\%$ \cite{art:le2008, art:mcspadden02}.

\subsection{Related Works}

Recent advances in signal processing and microwave technology have shown that far-field power transfer can offer interesting perspectives also in the context of traditional wireless communication systems. For instance, the same electromagnetic field could be used as a carrier for both energy and information, realizing the so-called simultaneous wireless information and power transfer (SWIPT). The potential of this approach has been first highlighted by the studies proposed in \cite{Varshney} and \cite{ShannonTeslar}. Therein the trade-off between the two transfers within SWIPT was investigated, in the case of both flat fading and frequency-selective channels. After these pioneering works, several studies have been performed to assess the implementability of receivers that can make use of RF signals to both harvest energy and decode information \cite{art:hui13}, and analyze the performance of SWIPT in many scenarios, e.g., multi-antenna systems \cite{art:zhang13}, opportunistic networks \cite{art:kaibin13}, wireless sensor networks \cite{art:visser13}. In particular, the aforementioned trade-off is investigated in multiple-input multiple-output (MIMO) systems for two information and power transfer architectures, i.e., time-switching and power-splitting, for both perfect \cite{art:zhang13} and imperfect \cite{imperfectCSI} channel state information (CSI). A different approach is considered in \cite{ReviewerRef2}, where the trade-off between the information and the energy transfer is investigated in a multi-user network under two different constraints, i.e., a constraint expressed in terms of secrecy rate for the former and amount of harvested energy at the receiver for the latter.

A line of work considering orthogonal frequency division multiplexing (OFDM) systems is presented in \cite{NgOFDMSingleSplitting, NgOFDMMultiSeparate, NgOFDMMultiSplitting}, where the resource allocation policy at the transmitter is studied for both single and multi-user scenarios, in which the receiver adopts a power-splitting strategy and considers undesired interference as an additional energy resource. Finally, ad-hoc scenarios departing from the standard SWIPT paradigm have been analyzed in works such as \cite{ChenEnergyBeamforming, ReviewerRef1}, where it is assumed that the energy and information transfers are performed by two different devices, operating in frequency-division duplexing (FDD) mode. More precisely, in \cite{ChenEnergyBeamforming} the time allocation policy for the two transmitters is studied, under the assumption that the efficiency of the energy transfer is maximized by means of an energy beamformer that exploits a quantized version of the CSI received in the uplink by the energy transmitter. Along similar lines, \cite{ReviewerRef1} investigates the optimal time and power allocations strategies such that the amount of harvested energy is maximized, taking into account the impact of the CSI accuracy on the latter quantity.

\subsection{Summary of Our Contribution}

In general, most existing works for SWIPT rely on ideal assumptions: a) availability of perfect CSI at the transmitter, b) no penalty for CSI acquisition, c) no power consumption for signal decoding operations at the receiver. If these assumptions are relaxed and the resulting penalties and issues are considered, then the performance of wireless systems can decrease significantly. Studies taking into account these aspects have been proposed for conventional information transfers, and the effect of imperfect CSI acquisition, training, feedback as well as the resource allocation problem have been thoroughly analyzed \cite{CaireFeedback}. Departing from these observations, in this work we aim at investigating the efficacy of SWIPT for practically relevant scenarios, by relaxing the three aforementioned ideal assumptions. In particular, we target a multiple-input single-output (MISO) system consisting of a multi-antenna access point (AP) which transfers both information symbols and energy to a single user terminal (UT) that does not have access to any external power source. Accordingly, in contrast to the previous contributions on this topic, in this work the power harvested by means of the WPT is used by the UT to perform all the necessary signal processing operations for both information decoding and uplink communications. Additionally, we adopt a systematic approach and consider the three main possible scenarios for an AP that engages in a downlink transmission in modern networks, to provide a more complete characterization of the considered system. Consequently, in this work, the performance and feasibility of the SWIPT in a MISO system is studied for the three following cases:
\begin{itemize}
\item No CSI available at the AP.
\item Time-division duplexing (TDD) communications and CSI acquisition at the AP by means of training symbols.
\item FDD communications and CSI acquisition at the AP by means of analog symbols feedback.
\end{itemize}
We compare these three scenarios for three performance metrics of interest, namely, ergodic downlink rate, energy shortage probability, and data outage probability. Our contributions in this work are as follows:
\begin{itemize}
\item We derive closed-form representations for the three performance metrics of interest in all three scenarios and match them to the numerical results.
\item We derive the approximations of the ergodically optimal duration of the WPT phase in all the three scenarios as a portion of the channel coherence time.
\item Additionally, for the TDD and the FDD scheme, we derive closed-form approximations for the ergodically optimal duration of the channel training/feedback phases, to maximize the downlink rate.
\item We show that the TDD scheme can outperform the FDD scheme in SWIPT systems in terms of both downlink rate and data outage probability.
\end{itemize}
Our numerical findings verify the correctness of our derivations. More specifically, concerning the downlink rate, we show that the performance gap between the numerical optimal solutions and the results obtained by means of our approximations is very small for low to mid signal-to-noise ratio (SNR) values and negligible for high SNR values. Moreover, we show that both TDD and FDD outperform the non-CSI case at any SNR value in terms of downlink rate. This confirms that CSI knowledge at the AP is always beneficial for the information transfer in SWIPT systems, despite both its imperfectness and the resources devoted to the channel estimation/feedback procedures. The correctness of our derivations is further verified when numerically evaluating both the energy shortage and data outage probability of the considered MISO system adopting SWIPT, for which a perfect match of analytic and numerical results is achieved. Finally, it is worth noting that throughout our study TDD consistently outperforms FDD in terms of both downlink rate and data outage probability, confirming the potential of this duplexing scheme for the future advancements in modern networks. 

The rest of the paper is organized as follows. In Sec.~\ref{Sec: System Setup}, we specify the system model. In Sec.~\ref{Sec: Non-CSI scheme}, \ref{Sec: TDD scheme}, and \ref{Sec: FDD scheme}, we specify the system model and derive the downlink rate for the non-CSI, the TDD, and the FDD scheme. In Sec.~\ref{Sec: Outage probability}, we derive the energy shortage and data outage probability for each scheme. In Sec.~\ref{Sec: Numerical results}, we show and discuss the numerical results. Finally, we conclude in Sec.~\ref{Sec: Conclusions}. 

{\it Notations:} 
In this paper, we denote matrices as boldface upper-case letters, vectors as boldface lower-case letters. Additionally, we let $[\cdot]^\dag$ be the conjugate transpose of a vector. All vectors are columns, unless otherwise stated. Furthermore, for a scalar $c \in \mathbb{C}$ we note by $c^*$ its complex conjugate. We use $\mathbf{x} \perp \mathbf{y}$ to express the orthogonality between vectors $\mathbf{x}$ and $\mathbf{y}$. We denote a circular symmetric Gaussian random vector with mean $\boldsymbol{\mu}$ and covariance matrix $\mathbf{\Sigma}$ as $\mathcal{C}\mathcal{N}\left(\boldsymbol{\mu},\mathbf{\Sigma}\right)$. The chi-squared distribution with $K$ degrees of freedom is denoted by $\chi_K^2$ and its probability density function (PDF)  is given by $f_{X}\left(x\right)=\frac{1}{\Gamma\left(\frac{K}{2}\right)}2^{-K/2}x^{\frac{K}{2}-1}e^{-x/2}$ where $\Gamma(q)=\int_{0}^{\infty}u^{q-1}e^{-u}\mathrm{d}u$ is the Gamma function. The non-central chi-squared distribution with $K$ degrees of freedom and non-central parameter $\nu$ is denoted by $\chi_K^{'2}\left(\nu\right)$ and its PDF is given by
$f_{X}\left(x \right)=\frac{1}{2}e^{-\left(x+\nu\right)/2}\left(\frac{x}{\nu}\right)^{K/4-1/2}I_{\frac{K}{2}-1}\left(\sqrt{\nu x}\right)$, 
$I_n\left(\cdot\right)$, modified Bessel function of the first kind. 
$\Gamma(q,r)=\int_{r}^{\infty}u^{q-1}e^{-u}\mathrm{d}u$ is the upper incomplete Gamma function.
$Q_M\left(q,r\right)=\int_{r}^{\infty}\frac{\xi^M}{q^{M-1}}\exp\left(-\frac{\xi^2+q^2}{2}\right)I_{M-1}\left(q\xi\right)\mathrm{d}\xi$ is the generalized Marcum Q-function \cite{art:marcum50}.

\section{System Model}\label{Sec: System Setup}

We consider a point-to-point communication system consisting of an AP with $L$ antennas and a UT with single antenna.  We denote the downlink channel (from the AP to the UT) as $\mathbf{h}=[h_1,\cdots,h_L]^\top$. The channel is assumed to be block fading, with independent fading from block to block. The entries of the channel vector are complex Gaussian (Rayleigh fading), hence $\mathbf{h} \sim \mathcal{CN}\left(\mathbf{0},\mathbf{I}_L\right)$. Let $T_C$ be the coherence time length. For simplicity in the notation, we assume that the total number of symbols that can be transmitted within the coherence time is $T_C$. The AP transmits the symbol $\mathbf{x} \in \mathbb{C}^{L \times 1}$ with a transmit power $P$, i.e., $\mathbb{E}\left[\lVert\mathbf{x}\rVert^2\right]=P$. The received signal at the UT is given by
$y = \mathbf{h}^{\dag}\mathbf{x}+n,$
where $n\sim \mathcal{CN}\left(0,N_0\right)$ is the thermal noise, modeled as a complex additive white Gaussian noise (AWGN). We assume the UT does not have any external power source (such as the battery) and all the power required for the operations to be performed at the UT is provided by the AP through the WPT component of the SWIPT. Accordingly, the UT is equipped with a circuit that can perform two different functions: a) harvest energy from the received RF signal, b) information decoding. As considered in previous literature \cite{art:zhang13}, we assume that the UT cannot harvest energy and decode information from the same signal, at the same time. Hence, a time switching strategy is adopted under which the AP transmits the signals in two phases: the signal sent during the first phase has WPT purposes and is used by the UT to harvest energy, whereas the signal sent in the second phase has information transfer purposes. Note that, throughout this work, we assume that the energy harvested in the first phase (power transfer phase) is the sole source of power for all the subsequent operations performed by the UT (the exact details of these operations will be specified the following sections).\footnote{Note that, the words energy and power are used interchangeably in this paper, for the sake of simplicity, in spite of their conceptual difference.}

\subsubsection{Details of the Power Transfer Phase}

During each coherence time interval, the AP first transmits the power wirelessly to the UT for $\epsilon< T_C$\footnote{The exact value of $\epsilon$ will be specified later depending upon the mode of operation.} time slots. First, the AP divides its power $P$ equally between its $L$ transmit antennas to perform the WPT. Hence the $L$-sized transmit symbol during this phase, denoted by ${\bf x}_{\text{EH}}$, is given by
${\bf x}_{\text{EH}} = \sqrt{\frac{P}{L}} {\bf s},$
where ${\bf s}$ is a random vector with zero mean and covariance matrix $\mathbb{E}\left[{\bf s}{\bf s^{\dag}}\right]=\mathbf{I}_L$. Thus, the power harvested at the UT is given by
\begin{equation} \label{Eq: System-Harvested power}
P_H=\frac{\beta P\lVert\mathbf{h}\rVert^2}{L},
\end{equation}
where $\beta \in [0,1]$ is a coefficient that measures the efficiency of the RF to direct current (RF-to-DC) power conversion \cite{art:zhang13, art:visser13}.

\subsubsection{Details of the Information Transmission Phase}

For the second phase, namely information transmission phase, we adopt a systematic approach and consider the three scenarios. In the first one, the AP transmits the information symbols without the knowledge of CSI (we will refer to this approach as non-CSI scheme). In the second scheme, we consider a TDD communication, in which the downlink and uplink communications are performed over the same bandwidth. Accordingly, first the AP acquires the CSI by evaluating a pilot sequence transmitted by the UT in the uplink, and then engages in the downlink transmission. In the last case, we consider an FDD communication, in which the downlink and uplink communications are performed over two separate bandwidths. Consequently, in this case, UT sends an analog feedback signal in the uplink, carrying the downlink channel estimation, to allow the AP to acquire the CSI and subsequent transmit information symbols.

Under the aforementioned settings, we analyze the performance of the system for two different metrics of interest, namely the downlink rate and outage probability. We provide a detailed analysis of these two metrics in the rest of the paper.

\section{Analysis of the Downlink Rate}\label{sec:downlink}

In this section, we analyze the downlink rate for the three considered schemes.

\subsection{Non-CSI Scheme}\label{Sec: Non-CSI scheme}

We first consider the case where the AP transmits the information symbols without the knowledge of CSI. The schematic diagram of this scenario is shown in Fig.~\ref{Fig: Non CSI - System model}.
\begin{figure}[!h]
\centering
\subfigure[Operations of the AP.]
{\def\svgwidth{.67\columnwidth}
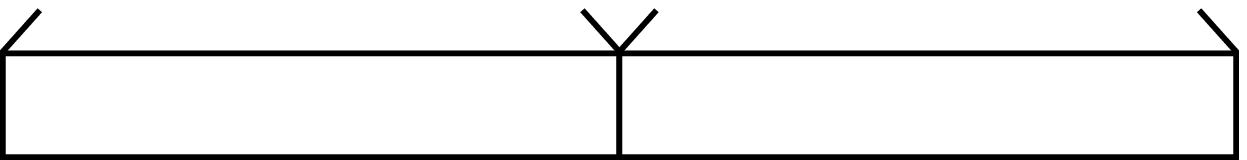}
\label{Fig: Non CSI - Operation of AP} 
\subfigure[Operations of the UT.]
{\def\svgwidth{.67\columnwidth}
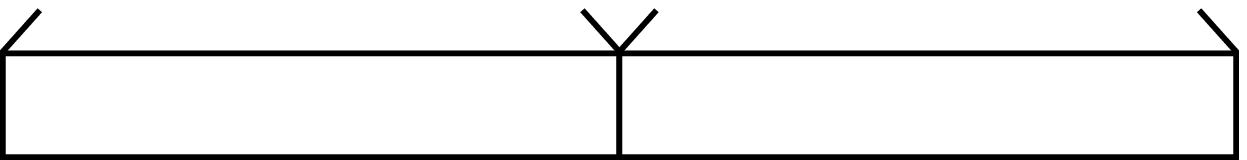}
\label{Fig: Non CSI - Operation of UT} 
\caption{Operations of the AP and the UT during the coherence time in the non-CSI scheme.}
\label{Fig: Non CSI - System model}
\end{figure}
Under this scheme, the system utilizes $\epsilon = \alpha_NT_C$ symbols to transfer power
and the remaining symbols to transmit information symbols, where $0<\alpha_N< 1$. 
The received signal during the information transmission phase
is given by
$y = \mathbf{h}^\dagger \mathbf{x} + n,$
where the $\mathbf{x} = \sqrt{\frac{P}{L}} \mathbf{s}$ and 
$\mathbf{s} \sim \mathcal{CN}(\mathbf{0},\mathbf{I}_L).$
Note that in the absence of CSI, the AP performs equal power allocation over all its antennas to transmit the information
symbol.
For the information decoding at the UT, we consider that the power consumption of the circuit components devoted to the decoding is proportional to the number of received symbols (as typically considered in previous works on the subject \cite{conf:heinzelman00}). Accordingly, we denote the power consumption per decoded symbol at the UT as $P_D$.

Since the power harvested in the first phase must be sufficient to decode all the information symbols, we have that
$\alpha_NT_C P_H=\left(1-\alpha_N\right)T_C P_D.$
Now, if we plug \eqref{Eq: System-Harvested power} into this equation then, after some manipulations, we have that the minimum fraction of time that should be devoted to the power transfer, i.e., $\alpha_N$, given by
\begin{equation}\label{Eq: Non-Time portion alpha}
\alpha_N=\frac{LP_D}{\beta P\lVert\mathbf{h}\rVert^2+LP_D}.
\end{equation}
We now analyze the downlink rate for this scheme. We recall that the AP can transmit $1-\alpha_N$ symbols for the information transfer. Accordingly, using \eqref{Eq: Non-Time portion alpha}, the downlink rate obtained for the non-CSI scheme is given by
\begin{align}
R_{NC}& =R_{NC}(\alpha_N)=\left(1-\alpha_N\right)\log_2\left(1+\frac{P\lVert\mathbf{h}\rVert^2}{N_0L}\right) \notag
\\&=\frac{\beta P\lVert\mathbf{h}\rVert^2}{\beta P\lVert\mathbf{h}\rVert^2+LP_D}\log_2\left(1+\frac{P\lVert\mathbf{h}\rVert^2}{N_0L}\right).\label{Eq: Non-Rate}
\end{align}

\subsection{TDD Scheme}\label{Sec: TDD scheme}

We switch our focus to the TDD scheme, whose schematic diagram is shown in Fig.~\ref{Fig: TDD-System model}.
\begin{figure}[!h]
\centering
\subfigure[Operations of the AP.]{\def\svgwidth{\columnwidth}
 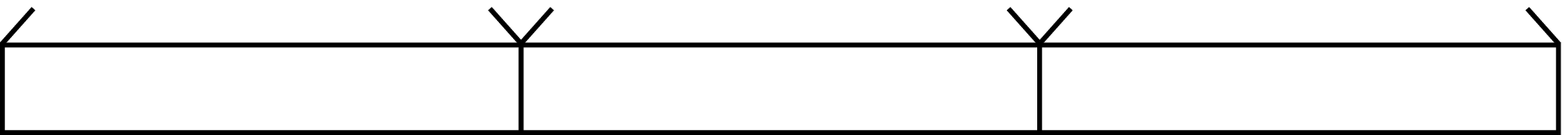}
 \label{Fig: TDD-Operation of AP} 
\subfigure[Operations of the UT.]{\def\svgwidth{\columnwidth}
 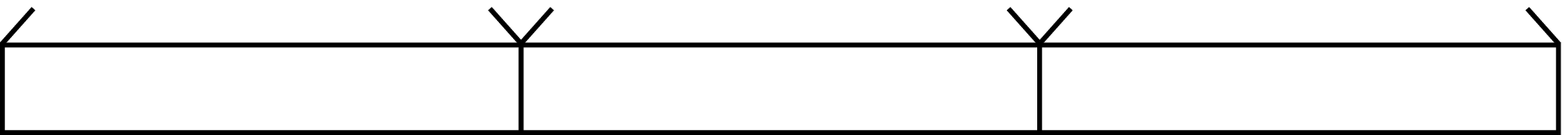}
 \label{Fig: TDD-Operation of UT} 
\caption{Operations of the AP and the UT during the coherence time in the TDD scheme.}
\label{Fig: TDD-System model}
\end{figure}
We recall that, in this case, the AP should provide the UT with sufficient energy for the latter to be able not only to decode the received data but also to perform all the operations related the uplink signaling inherent to the TDD scheme. Accordingly, the system utilizes $\epsilon = \alpha_TT_C$ symbols for power transfer, where $0<\alpha_T<1$. This is followed by the CSI acquisition phase. Since we assume that all the operations are performed within the coherence time, the AP can exploit the reciprocity of the downlink and uplink channels, inherent feature of the TDD scheme. This way, the channel estimated in the uplink can be used to design the beamformer for the downlink transmission. Accordingly, the UT transmits uplink pilots with power $P_E$ for the next $\eta_TT_C\in\mathbb{Z}^{+}$ symbol periods, with $0<\eta_T<1$ and $0<\alpha_T+\eta_T\leq 1$. The signal received by the AP during the $i$th symbol period (in the uplink pilot transmission phase) is given by
$\mathbf{y}^{p}_{T}[i]=\sqrt{P_E}\mathbf{h}^{*}+\mathbf{w}[i],$
where $\mathbf{w}[i] \sim \mathcal{C}\mathcal{N}(\mathbf{0},N_0\mathbf{I}_L)$ is the Gaussian noise at the AP. The AP estimates the channel by a minimum variance unbiased (MVU) based estimator \cite{KayEstimation}. Thus, the channel estimate at the AP is given by
\begin{align}
\mathbf{\hat{h}} & =\frac{1}{\sqrt{P_E}\eta_TT_C}\sum\limits_{i=1}^{\eta_TT_C}\left(\sqrt{P_E}\mathbf{h}+\mathbf{w}^*[i]\right)  =\mathbf{h}+\bar{\mathbf{w}},\label{Eq: TDD-Estimated channel at AP}
\end{align}
where $\mathbf{\bar{w}}\sim\mathcal{C}\mathcal{N}\left(\mathbf{0},\frac{N_0}{\eta_TT_CP_E}\mathbf{I}_L\right)$ denotes the estimation error.

This is followed by the information transmission phase. The focus of this work is on the performance of the SWIPT under non-ideal system assumptions and practical transmit schemes. Therefore, for simplicity we will assume that the AP beamforms the signal carrying the information symbols with a matched filter precoder (MFP) \cite{conf:demig2008}, optimal linear filter for maximizing the SNR. Accordingly, the AP exploits the CSI estimate to design the desired beamforming vector, obtained as
$\mathbf{m}_T =  \mathbf{\hat{h}}/\lVert\mathbf{\hat{h}}\rVert.$
Then, the received signal at the UT is given by
\begin{equation}
y_{T}^{s}=\frac{\mathbf{h}^{\dag}\mathbf{\hat{h}}}{\lVert\mathbf{\hat{h}}\rVert}s+n,\label{Eq: TDD-Received information}
\end{equation}
where $s$ is the information symbol, with $\mathbb{E}\left[\lvert s\rvert^2\right]=P$. As in the previous case, the UT consumes a power $P_D$ to decode every received information symbol. Thus, since the harvested power by the UT must be sufficient to send the pilot symbols and decode information at the UT, the condition
$\alpha_TT_C P_H=\eta_TT_C P_E+\left(1-\alpha_T-\eta_T\right)T_CP_D$
must be satisfied for the TDD scheme. Now, if we plug \eqref{Eq: System-Harvested power} into this condition then, after some manipulations, we have that the minimum fraction of time that should be devoted to the power transfer, i.e., $\alpha_T$, is given by
\begin{equation} \label{Eq: TDD-Time portion alpha}
\alpha_T=\frac{\eta_TLP_E-\eta_TLP_D+LP_D}{\beta P\lVert\mathbf{h}\rVert^2+LP_D}.
\end{equation}
Accordingly, we can use \eqref{Eq: TDD-Time portion alpha} to compute the downlink rate for the TDD scheme as
\begin{align}
&R_T =R_T (\alpha_T,\eta_T)=\left(1-\alpha_T-\eta_T\right)\log_2\left(1+\frac{P\lvert\mathbf{h}^{\dag}\mathbf{\hat{h}}\rvert^2}{N_0\lVert\mathbf{\hat{h}}\rVert^2}\right)\notag
\\&=\frac{\left(1-\eta_T\right)\beta P\lVert\mathbf{h}\rVert^2-\eta_TLP_E}{\beta P\lVert\mathbf{h}\rVert^2+LP_D}\log_2\left(1+\frac{P\lvert\mathbf{h}^{\dag}\mathbf{\hat{h}}\rvert^2}{N_0\lVert\mathbf{\hat{h}}\rVert^2}\right). \label{Eq: TDD-Rate}
\end{align}
Note that the channel training time $\eta_T T_c$ impacts both the accuracy of the estimated channel vector and the remaining available time for information transfer. As a consequence, let $\eta^{\star}_T$ be the duration of the portion of coherence time devoted to the channel training/estimation that maximizes the ergodic downlink rate, defined as
\begin{align}
\eta^{\star}_T = \operatorname{argmax}_{\eta_T}
\mathbb{E}_{\mathbf{h},\mathbf{\bar{\mathbf{w}}}}\Bigg[\frac{\left(1-\eta_T\right)\beta P\lVert\mathbf{h}\rVert^2-\eta_TLP_E}{\beta P\lVert\mathbf{h}\rVert^2+LP_D}&\notag
\\\times\log_2\left(1+\frac{P\lvert\mathbf{h}^{\dag}\mathbf{\hat{h}}\rvert^2}{N_0\lVert\mathbf{\hat{h}}\rVert^2}\right)&\Bigg].\label{Eq: SecIII-TDD-rate optimal problem}
\end{align}
The derivation of the exact value of $\eta^{\star}_T$ is very complicated. However, two approximations of this value, valid for high and low SNR, respectively, can be derived as stated in the following result.

\begin{lemma}\label{Lem: TDD eta over all channels}

At high SNR, $\eta^{\star}_T$ can be approximated as
\begin{equation}
\eta_T^{\star} \approx\sqrt{\frac{N_0L\log_2e}{B_1T_CP_E\left(L-1\right)}},\label{Eq: TDD_opt_high_SNR}
\end{equation}
where
$
B_1=\mathbb{E}_{\mathbf{h}}\left[\left(1+\frac{LP_E}{\beta P\lVert\mathbf{h}\rVert^2}
\right)\log_2\left(\frac{P\lVert\mathbf{h}\rVert^2}
{N_0}\right)\right].
$
At low SNR, it can be approximated as
\begin{align}
\eta_T^{\star} &\approx\frac{N_0}{T_CP_E}
\left(-1+\sqrt{1+\frac{\left(L-1\right)\beta PT_CP_E}{LN_0\left(\beta P+P_E\right)}-\frac{1}{L}}\right).\label{Eq: TDD_opt_low_SNR}
\end{align}
\end{lemma}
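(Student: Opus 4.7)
The plan is to compute $\partial \mathbb{E}[R_T]/\partial \eta_T$ in each regime after replacing the logarithm by a suitable asymptotic expansion, and then solve the resulting algebraic equation for $\eta_T^\star$. Because only the log term in $R_T$ depends on the estimation noise $\bar{\mathbf{w}}$, I would first carry out the inner expectation $\mathbb{E}_{\bar{\mathbf{w}}}[\log_2(1+P\lvert\mathbf{h}^{\dag}\hat{\mathbf{h}}\rvert^2/(N_0\lVert\hat{\mathbf{h}}\rVert^2))]$ conditional on $\mathbf{h}$. Orthogonally decomposing $\bar{\mathbf{w}} = \bar w_\|\,\mathbf{h}/\lVert\mathbf{h}\rVert + \bar{\mathbf{w}}_\perp$ with $\bar w_\|\sim\mathcal{CN}(0,\sigma_w^2)$, $\bar{\mathbf{w}}_\perp\perp\mathbf{h}$, and $\sigma_w^2 = N_0/(\eta_T T_C P_E)$, one obtains $\lvert\mathbf{h}^{\dag}\hat{\mathbf{h}}\rvert^2/\lVert\hat{\mathbf{h}}\rVert^2 = \lVert\mathbf{h}\rVert^2\,A/(A+B)$, where $A = \lvert\lVert\mathbf{h}\rVert + \bar w_\|\rvert^2$ and $B = \lVert\bar{\mathbf{w}}_\perp\rVert^2$ are independent with $\mathbb{E}[B] = (L-1)\sigma_w^2$. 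This confines the $\eta_T$ dependence inside the log purely to $\sigma_w^2$.

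For the high-SNR branch, I would use $\log_2(1+x)\approx\log_2 x$ and split $\log_2\mathrm{SNR} = \log_2(P\lVert\mathbf{h}\rVert^2/N_0) + \log_2(A/(A+B))$; a first-order Taylor expansion of the second term in the small quantity $B/A \sim (L-1)\sigma_w^2/\lVert\mathbf{h}\rVert^2$ yields an additive penalty proportional to $(L-1)N_0\log_2 e/(\eta_T T_C P_E\lVert\mathbf{h}\rVert^2)$. Combining this with the approximation of the pre-log factor $\frac{(1-\eta_T)\beta P\lVert\mathbf{h}\rVert^2 - \eta_T LP_E}{\beta P\lVert\mathbf{h}\rVert^2 + LP_D} \approx 1 - \eta_T(1 + LP_E/(\beta P\lVert\mathbf{h}\rVert^2))$ (valid when $\beta P\lVert\mathbf{h}\rVert^2\gg LP_D$), and then taking expectation over $\mathbf{h}$ using $\mathbb{E}[1/\lVert\mathbf{h}\rVert^2] = 1/(L-1)$ for Gamma-distributed $\lVert\mathbf{h}\rVert^2$, the ergodic rate collapses to the form $\mathbb{E}[R_T]\approx A_0 - B_1 \eta_T - C_1/\eta_T$, with $B_1$ precisely the quantity in the lemma statement and $C_1$ a multiple of $N_0 L\log_2 e/[T_C P_E(L-1)]$. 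The first-order condition $-B_1 + C_1/\eta_T^2 = 0$ then yields the square-root expression \eqref{Eq: TDD_opt_high_SNR}.

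For the low-SNR branch, I would instead use $\log_2(1+x)\approx x\log_2 e$. The rate is then proportional to $\mathbb{E}\!\left[\tfrac{(1-\eta_T)\beta P\lVert\mathbf{h}\rVert^2 - \eta_T LP_E}{\beta P\lVert\mathbf{h}\rVert^2 + LP_D}\cdot\lVert\mathbf{h}\rVert^2\,A/(A+B)\right]$; approximating $\mathbb{E}_{\bar{\mathbf{w}}}[A/(A+B)]\approx (\lVert\mathbf{h}\rVert^2+\sigma_w^2)/(\lVert\mathbf{h}\rVert^2 + L\sigma_w^2)$ and, under the low-SNR concentration, replacing $\lVert\mathbf{h}\rVert^2$ by its mean $L$ throughout reduces the ergodic rate to a rational function of the single variable $\sigma_w^2$. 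Rewriting the stationary equation in the auxiliary variable $u = 1 + \eta_T T_C P_E/N_0$ brings it to the quadratic form $u^2 = 1 + K$, with $K$ matching the expression under the radical in \eqref{Eq: TDD_opt_low_SNR}; taking the positive root and inverting the change of variable gives the stated closed form.

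The main obstacle is that $\eta_T$ enters both outside the logarithm (through the pre-log factor) and inside it (through $\sigma_w^2$ in the effective SNR), which makes the exact first-order condition transcendental with no closed-form root. The two asymptotic regimes are precisely what collapses the stationary equation to a tractable form: at high SNR the leading dependence reduces to a clean $\eta_T$ versus $1/\eta_T$ balance, while at low SNR the linearization of $\log_2(1+x)$ yields a rational function whose stationary equation is genuinely quadratic. Some care is required to retain only the dominant terms in $\sigma_w^2/\lVert\mathbf{h}\rVert^2$ (respectively in $P/N_0$) so that the algebra collapses to the simple square-root expressions displayed in the lemma.
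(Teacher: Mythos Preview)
Your proposal is correct and follows essentially the same route as the paper: condition on $\mathbf{h}$, approximate the inner expectation over the estimation noise via a Taylor/mean-value argument to obtain the additive penalty $\propto 1/\eta_T$ at high SNR and the rational factor $(\lVert\mathbf{h}\rVert^2+\sigma_w^2)/(\lVert\mathbf{h}\rVert^2+L\sigma_w^2)$ at low SNR, then average over $\mathbf{h}$ (using $\mathbb{E}[1/\lVert\mathbf{h}\rVert^2]=1/(L-1)$ or replacing $\lVert\mathbf{h}\rVert^2$ by its mean) and solve the resulting stationary equation. The only cosmetic differences are that the paper parametrizes the same quantities via the noncentral chi-squared variables $\Psi_1,\Psi_2$ rather than your orthogonal decomposition $A,B$, and in the low-SNR branch it reaches the ratio-of-means form via an explicit Jensen step $\mathbb{E}[\exp(\ln X)]\geq\exp(\mathbb{E}[\ln X])$ instead of your direct mean replacement.
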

\begin{proof}
See Appendix-\ref{Apx: TDD eta over all channels}.
\end{proof}
Lemma \ref{Lem: TDD eta over all channels} provides a result whose interpretation is not trivial. In fact, several parameters are present in \eqref{Eq: TDD_opt_high_SNR} and \eqref{Eq: TDD_opt_low_SNR}, thus understanding their impact on the accuracy of the proposed approximations is rather complex. However, some interesting insights can drawn from Lemma \ref{Lem: TDD eta over all channels}, if we focus on the approximations that are introduced in order to derive the final results. First, we note that the impact of $P_D$ on the accuracy of the results is likely negligible, due the fact that $P \gg P_D$ by construction. Subsequently, let us focus on the quantity $\lambda=\frac{2\eta_TT_CP_E\lVert\mathbf{h}\rVert^2}{N_0}$, introduced in \eqref{Eq: TDD eta high SNR Taylor series-1}. If we fix $N_0$ at the denominator of $\lambda$ then it is straightforward to see that the latter increases with an increase in $P_E$ and $L$. Now, consider the low SNR case. In this case, $N_0$ is very large, thus the approximation $\lambda \approx 0$ is adopted. In practice, the accuracy of this approximation depends on the value of $L$ and $P_E$, i.e., the lower those values are, the more accurate the approximation is. Switching our focus to the high SNR analysis, we observe an opposite behavior. In fact, in this case $N_0$ is very small, hence the approximation $\lambda \gg 0$ is introduced. Thus, the accuracy of the approximation is greater when $P_E$ and $L$ are large. Concerning the latter parameter, i.e., $L$ number of antennas at the AP, we note that an analysis of its impact on the accuracy of the results in Lemma \ref{Lem: TDD eta over all channels} is extremely interesting, given its relevance in a MISO systems. Accordingly, a detailed discussion on this aspect will be provided in Sec.~\ref{Sec: Numerical results}.

\subsection{FDD Scheme}\label{Sec: FDD scheme}

We now consider the FDD scheme, whose schematic diagram is illustrated in Fig.~\ref{Fig: FDD-System model}.
\begin{figure}[!h]
\centering
\subfigure[Operations of the AP.]{\def\svgwidth{.67\columnwidth}
 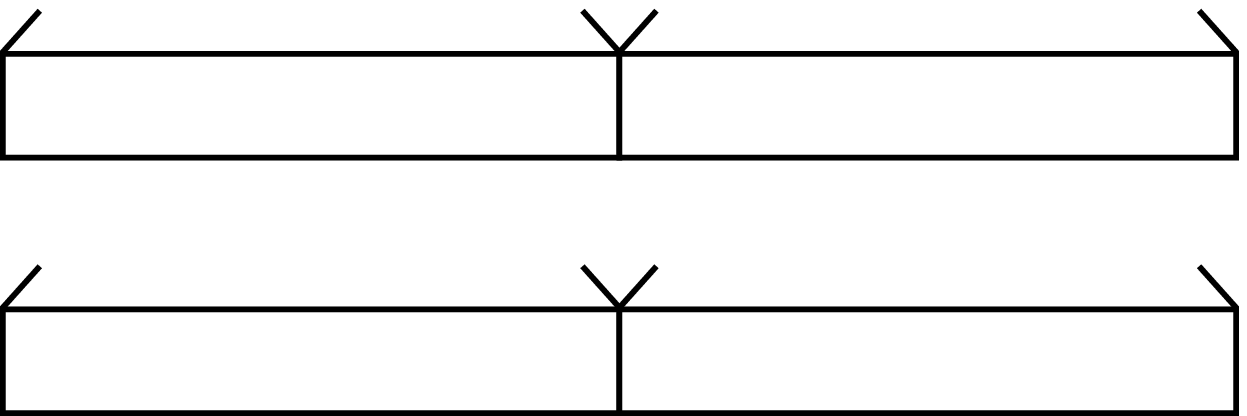}
\vspace{1mm}
\subfigure[Operations of the UT.]{\def\svgwidth{.67\columnwidth}
 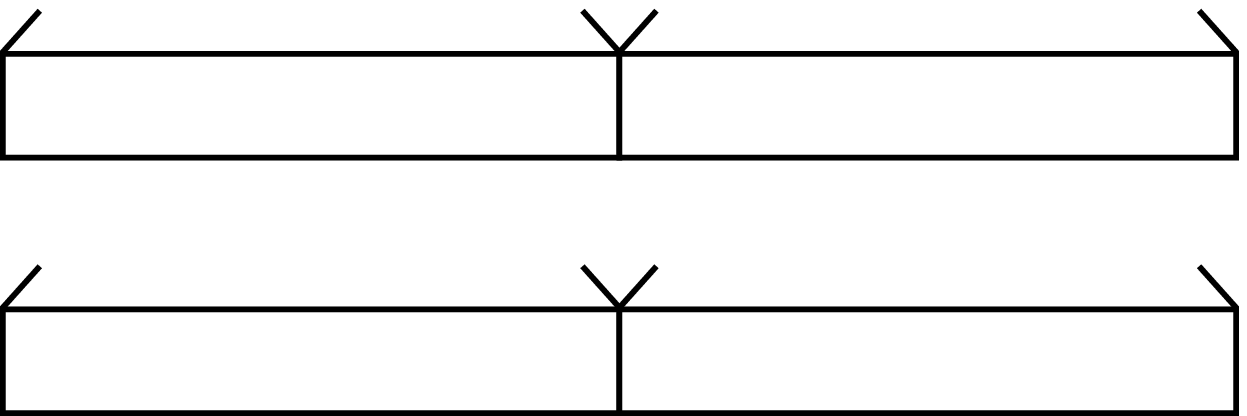}
 \caption{Operations of the AP and the UT during the coherence time in the FDD scheme.}
 \label{Fig: FDD-System model}
\end{figure}
%
%
Differently from the TDD case, the downlink and uplink channels are in general uncorrelated in the FDD scheme. Therefore, two separate channel estimation procedures have to be performed at the UT and the AP, to be able to provide to the latter with the CSI w.r.t. the downlink channel. Accordingly, the operations in the FDD scheme are as follows. First, the AP transfers power to the UT for a period of $\epsilon = \alpha_FT_C$, with $0<\alpha_F<1$. As before, we recall that the AP should provide the UT with sufficient energy for the latter to be able not only to decode the received data but also to perform all the operations related the uplink signaling inherent to the FDD scheme. Afterwards, a downlink channel training phase takes place, in which the AP sends pilot sequences of $\eta_FT_C\in\mathbb{Z}^{+}$ symbols with power $P$ to the UT for estimating the downlink channel, with $0<\eta_F<1$. Finally, the UT feeds back in the uplink the estimated CSI in analog form over the subsequent $\tau_FT_C\in\mathbb{Z}^{+}$ symbols, where $0<\tau_F<1$ and $0<\alpha_F+\eta_F+\tau_F\leq 1$. Note that, in this work we adopt a simplified model for the uplink communication, for the sake of simplicity of the analysis, and matters of space economy. Specifically, we assume that the feedback signal sent by the UT to the AP experiences an AWGN channel. We note that, this follows the typical approach proposed in the literature for first studies on CSI acquisition schemes based on analog feedback signals \cite{CaireFeedback, art:samardzija06}.

Now, let us analyze the aforementioned steps in detail. Consider the $l$th antenna. We denote the pilot sequence sent over it as $\mathbf{e}_l=[e_l[1] \cdots,e_l[\eta_FT_C]]^\top\in\mathbb{C}^{\eta_F T_C}$, $l \in [1,L]$. Naturally, the sequences adopted in this phase are known at both ends of the communications. In particular, without loss of generality, we assume orthogonality between pilot sequences sent over different antennas, i.e., $\mathbf{e}_i \perp \mathbf{e}_j,$ for $i \neq j$. Thus, in order to guarantee their orthogonality, and estimate $L$ independent channel coefficients, a lower bound on the minimum sequence size must be satisfied, i.e., $\eta_FT_C\geq L$. Moreover, the AP equally divides the power $P$ among its $L$ antennas, yielding $||e_l|| ^2 = \frac{P}{L}$ and thus $\lVert\mathbf{e}_{1}\rVert^2=\cdots=\lVert\mathbf{e}_{L}\rVert^2=\frac{\eta_FT_CP}{L}$. Then, the signal received by the UT during the downlink channel training phase is given by
$\mathbf{y}_{UT,F}^{p} = \mathbf{e}_{1}h_{1}^{*}+\cdots+\mathbf{e}_{L}h_{L}^{*}+\mathbf{w}_{UT},$
where $\mathbf{w}_{UT}\sim\mathcal{C}\mathcal{N}\left(\mathbf{0},N_0\mathbf{I}_{\eta_FT_C}\right)$ is the thermal noise at the UT.
The UT in turn multiplies the received signal $\mathbf{y}_{UT,F}^{p}$ by $\mathbf{e}_{l}^{\dag}/\lVert\mathbf{e}_{l}\rVert^2$ to estimate the $l$th channel coefficient, $h_l$. Similar to the previous section, the $L$ downlink channel coefficients are estimated by an MVU based estimator. The estimated channel vector at the UT can be written as
$\mathbf{\hat{h}}_{UT}=[\hat{h}_{UT,1}, \dots, \hat{h}_{UT,L}]^\top=\mathbf{h}+\mathbf{\hat{w}}_{UT},$
with $\mathbf{\hat{w}}_{UT}\sim\mathcal{C}\mathcal{N}\left(\mathbf{0},\frac{N_0L}{\eta_FT_CP}\mathbf{I}_L\right)$ estimation error vector at the UT. The power consumption at UT to decode a pilot sequence sent from one of the $L$ transmit antennas is modeled similarly to the previous case, i.e., proportional to $P_D$. Accordingly, the total power consumed in decoding the pilot symbols is given by $\eta_F T_C P_D.$ 

At this stage, the UT encodes each coefficient by means of a sequence $\mathbf{f}_l=[f_l[1],\cdots,f_l[\tau_FT_C]]^\top$ $\in\mathbb{C}^{\tau_FT_C}$, $\forall l \in [1,L]$, such that the $L$ sequences form an orthogonal set, i.e., $\mathbf{f}_i \perp \mathbf{f}_j$, for $i \neq j$, and $\lVert\mathbf{f}_{1}\rVert^2=\cdots=\lVert\mathbf{f}_{L}\rVert^2=\frac{\tau_FT_CP_F}{L}$. As before, the adopted sequences are known at both ends of the communications. In particular, in order to guarantee their orthogonality and encode $L$ independent channel coefficients, a lower bound on the minimum sequence size must be satisfied, i.e., $\tau_FT_C\geq L$. 

After the encoding, the signal to be fed back by the UT to the AP is obtained as the sum of all the obtained sequences at the previous step, i.e., $\mathbf{x}^{f}_{F} = \mathbf{f}_{1}\hat{h}_{UT,1}+\cdots+\mathbf{f}_{L}\hat{h}_{UT,L}$. Consequently, its transmission requires a power given by
\begin{equation}
\frac{P_F}{L}\left(\lvert\hat{h}_{UT,1}\rvert^2+\cdots+\lvert\hat{h}_{UT,L}\rvert^2\right)=\frac{P_F\lVert\mathbf{\hat{h}}_{UT}\rVert^2}{L}.\label{Eq: FDD-Feedbacked power}
\end{equation}
Then, the received signal by the AP is given by 
$\mathbf{y}^{f}_{AP,F} =\mathbf{f}_{1}\hat{h}_{UT,1}+\cdots+\mathbf{f}_{L}\hat{h}_{UT,L}+\mathbf{w}_{AP},$
where $\mathbf{w}_{AP}\sim\mathcal{C}\mathcal{N}\left(\mathbf{0},N_0\mathbf{I}_{\tau_FT_C}\right)$ is the thermal noise at the AP. Now, the latter multiplies the received sequence by $\mathbf{f}_k^\dag/\lVert\mathbf{f}\rVert^2$ to estimate $h_k$. Thus, the estimated channel vector at the AP is obtained as
$\mathbf{\hat{h}}_{AP}=\mathbf{h}+\mathbf{\hat{w}}_{UT}+\mathbf{\hat{w}}_{AP},$
where $\mathbf{\hat{w}}_{AP}\sim\mathcal{C}\mathcal{N}\left(\mathbf{0},\frac{N_0L}{\tau_F T_CP_F}\mathbf{I}_L\right)$. In particular, we note that $\mathbf{\hat{w}}_{UT}$ and $\mathbf{\hat{w}}_{AP}$ are independent by definition.

Finally, the AP can exploit the knowledge of $\mathbf{\hat{h}}_{AP}$ to derive the desired MFP as before, given by $\mathbf{\hat{h}}_{AP}/\lVert\mathbf{\hat{h}}_{AP}\rVert$, and use it as beamforming vector while transmitting the information symbols for the remaining $(1-\alpha_F-\eta_F-\tau_F)T_C$ symbols.

The received information symbol at the UT is given by
\begin{equation}
y_{UT,F}^{s}=\frac{\mathbf{h}^{\dag}\mathbf{\hat{h}}_{AP}}{\lVert\mathbf{\hat{h}}_{AP}\rVert}s+n,\label{Eq: FDD-Received information}
\end{equation}
where $s$ is the information symbol, with $\mathbb{E}\left[\lvert s\rvert^2\right]=P$. Concerning the energy required to perform all the operations at the UT, as a matter of fact, since the harvested energy must be sufficient to decode the received pilot sequences, feedback the estimated CSI, and decode the subsequent information, we have that the condition
$\alpha_FT_C P_H=\eta_FT_C P_D+\tau_FT_CP_F\lVert\mathbf{\hat{h}}_{UT}\rVert^2/L+\left(1-\alpha_F-\eta_F-\tau_F\right)T_CP_D$
must be satisfied. Therefore, if we plug \eqref{Eq: System-Harvested power} into this condition then, after some manipulations, we have that the minimum duration of the energy transfer/harvesting phase for this case, i.e., $\alpha_F$, should be 
\begin{equation}\label{Eq: FDD-Time portion alpha}
\alpha_F=\frac{\tau_FP_F\lVert\mathbf{\hat{h}}_{UT}\rVert^2-\tau_FLP_D+LP_D}{\beta P\lVert\mathbf{h}\rVert^2+LP_D}.
\end{equation}
Now, we can use \eqref{Eq: FDD-Time portion alpha} to compute the downlink rate for the FDD scheme as
\begin{align}
R_F &=R_F(\alpha_F,\eta_F,\tau_F)\notag\\
&=\left(1-\alpha_F-\eta_F-\tau_F\right)\log_2\left(1+\frac{P\lvert\mathbf{h}^{\dag}\mathbf{\hat{h}}_{AP}\rvert^2}{N_0\lVert\mathbf{\hat{h}}_{AP}\rVert^2}\right)\notag
\\&=\frac{\left(1-\eta_F-\tau_F\right)\beta P\lVert\mathbf{h}\rVert^2-\tau_FP_F\lVert\mathbf{\hat{h}}_{UT}\rVert^2-\eta_FLP_D}{\beta P\lVert\mathbf{h}\rVert^2+LP_D}\notag
\\&\quad\times\log_2\left(1+\frac{P\lvert\mathbf{h}^{\dag}\mathbf{\hat{h}}_{AP}\rvert^2}{N_0\lVert\mathbf{\hat{h}}_{AP}\rVert^2}\right).\label{Eq: FDD-Rate}
\end{align}
In this case, two parameters describe the duration of the channel estimation phase, i.e., $\eta_F$ and $\tau_F$, related to the channel estimation procedures at the UT and the AP, respectively. In practice, these parameters impact both the accuracy of the estimated channel vectors and the remaining available time for information transfer at the AP. As a consequence, let $(\eta^{\star}_F, \tau^{\star}_F)$ be the optimal couple of parameters that maximizes the ergodic downlink rate, defined as
\begin{align}
&\hspace{-0.6em}(\eta^{\star}_F,\tau^{\star}_F) = \operatorname{argmax}_{\eta_F,\tau_F}\mathbb{E}
_{\mathbf{\mathbf{h},\hat{w}}}\Bigg[\log_2\left(1+\frac{P\lvert\mathbf{h}^{\dag}\mathbf{\hat{h}}_{AP}\rvert^2}{N_0\lVert\mathbf{\hat{h}}_{AP}\rVert^2}\right)\notag
\\&\hspace{-0.83em}\times \frac{\left(1-\eta_F-\tau_F\right)\beta P\lVert\mathbf{h}\rVert^2-\tau_FP_F\lVert\mathbf{\hat{h}}_{UT}\rVert^2-\eta_FLP_D}{\beta P\lVert\mathbf{h}\rVert^2+LP_D}\Bigg],\label{Eq: SecIII-FDD-rate optimal problem}
\end{align}
where $\mathbf{\hat{w}}=\left(\mathbf{\hat{w}}_{AP},\mathbf{\hat{w}}_{UT}\right)$. As before, the derivation of the exact value of $\eta^{\star}_F$ and $\tau^{\star}_F$ is very complicated. Nevertheless, two approximations of this value, valid for high and low SNR, respectively, can be derived as stated in the following result.
\begin{lemma}\label{Lem: FDD eta over all channels}
At high SNR, $\eta^{\star}_F$ and $\tau_F^{\star}$ can be approximated as
\begin{align}
\eta_F^{\star}&\approx\sqrt{\left(1+\frac{P_F}{\beta P}\right)\frac{P_F}{P}}\times\tau_F^{\star},\label{Eq: FDD_opt_high_SNR-1}
\\\tau_F^{\star}&\approx\sqrt{\frac{N_0L^2\log_2e}{B_5T_C\left(L-1\right)P_F\left(1+\frac{P_F}{\beta P}\right)}},\label{Eq: FDD_opt_high_SNR-2}
\end{align}
where $B_5=\mathbb{E}_\mathbf{h}\left[\log_2\left(\frac{P\lVert\mathbf{h}\rVert^2}{N_0}\right)\right]$.
At low SNR, they can be approximated as
\begin{align}
\tau^{\star}_F&\approx\frac{N_0L\left(-1+\sqrt{1+\frac{4\beta P^2\left(\frac{\beta P}{P_F}+1+\frac{N_0L}{\eta_F^{\star}T_CP}\right)}{\left(\frac{N_0L}{\eta_F^{\star}T_C}\right)^2}}\right)}{2T_C\left(\beta P+P_F+\frac{P_FN_0L}{\eta_F^{\star}T_CP}\right)},\label{Eq: FDD_opt_low_SNR-1}
\\\eta_F^{\star}&\approx\frac{P_FN_0L}{PT_C\left(\beta P+P_F\right)}
\left(-1+\sqrt{1+\frac{T_CP\left(\beta P+P_F\right)}{P_FN_0L}}\right).\label{Eq: FDD_opt_low_SNR-2}
\end{align}
\end{lemma}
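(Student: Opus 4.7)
The plan is to mimic the strategy used to prove Lemma~\ref{Lem: TDD eta over all channels}, now with two continuous parameters $(\eta_F,\tau_F)$ rather than one. First, I would rewrite the objective in \eqref{Eq: SecIII-FDD-rate optimal problem} by isolating the dependencies: the pre-log factor depends on $(\eta_F,\tau_F)$ affinely and through $\|\hat{\mathbf{h}}_{UT}\|^2$, while the log term depends on them only through the effective quantity $\gamma=\frac{P\lvert\mathbf{h}^\dag\hat{\mathbf{h}}_{AP}\rvert^2}{N_0\lVert\hat{\mathbf{h}}_{AP}\rVert^2}$, where the estimation error $\hat{\mathbf{w}}_{UT}+\hat{\mathbf{w}}_{AP}$ in $\hat{\mathbf{h}}_{AP}$ has variance $\frac{N_0L}{\eta_FT_CP}+\frac{N_0L}{\tau_FT_CP_F}$. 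Using the independence of $\mathbf{h}$, $\hat{\mathbf{w}}_{UT}$ and $\hat{\mathbf{w}}_{AP}$, the outer expectation can be conditioned on $\mathbf{h}$ and split so that $\mathbb{E}[\|\hat{\mathbf{h}}_{UT}\|^2\mid\mathbf{h}]=\|\mathbf{h}\|^2+\frac{N_0L^2}{\eta_FT_CP}$ enters the pre-log cleanly.

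Next, I would take $\partial/\partial\eta_F$ and $\partial/\partial\tau_F$ of this expression and set both to zero. The resulting KKT system is coupled: each derivative contains a pre-log contribution (where the derivative is simply a constant of the form $-\beta P\|\mathbf{h}\|^2$, $-P_F\|\hat{\mathbf{h}}_{UT}\|^2$, or $-LP_D$, up to additional terms coming from $\partial \mathbb{E}[\|\hat{\mathbf{h}}_{UT}\|^2]/\partial\eta_F$) and a log-term contribution whose shape depends on the SNR regime.

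To close the system I would invoke regime-specific expansions analogous to those in the proof of Lemma~\ref{Lem: TDD eta over all channels}. In the high SNR regime one writes $\log_2(1+\gamma)\approx\log_2(P\|\mathbf{h}\|^2/N_0)+\Delta(\eta_F,\tau_F)$, where $\Delta$ is obtained by a first-order Taylor expansion in the aggregate estimation-error variance (this is the FDD counterpart of the $\lambda\gg 0$ approximation from \eqref{Eq: TDD eta high SNR Taylor series-1}); the constant $B_5=\mathbb{E}_\mathbf{h}[\log_2(P\|\mathbf{h}\|^2/N_0)]$ then emerges naturally. Discarding the $LP_D$ term (justified by $P\gg P_D$) and substituting into the stationarity conditions yields two equations linear in $1/\eta_F$ and $1/\tau_F$, which reduce to the ratio \eqref{Eq: FDD_opt_high_SNR-1} and the explicit square-root form \eqref{Eq: FDD_opt_high_SNR-2}. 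In the low SNR regime I would instead use $\log_2(1+\gamma)\approx\gamma/\ln 2$, which factorizes the expectation and turns the objective into an explicit rational function of $(\eta_F,\tau_F)$; the two stationarity conditions then reduce to quadratics, producing \eqref{Eq: FDD_opt_low_SNR-1} for $\tau_F^\star$ at fixed $\eta_F^\star$ and \eqref{Eq: FDD_opt_low_SNR-2} for $\eta_F^\star$ after dropping subdominant $P_D/(\beta P)$ terms.

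The main obstacle is the two-variable coupling: unlike in the TDD proof, the estimation-error contribution to $\hat{\mathbf{h}}_{AP}$ is the sum of two independent noises with variances tied to the two different optimization variables, and the pre-log additionally contains $\|\hat{\mathbf{h}}_{UT}\|^2$, which itself depends on $\eta_F$. I expect the cleanest route is to freeze $\eta_F$, solve for $\tau_F^\star$ in closed form from one KKT equation, then back-substitute into the other equation to solve for $\eta_F^\star$; this sequential decoupling is consistent with the asymmetric way the claims are stated ($\eta_F^\star$ in terms of $\tau_F^\star$ in the high-SNR case, and $\tau_F^\star$ in terms of $\eta_F^\star$ in the low-SNR case).
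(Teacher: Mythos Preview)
Your plan is essentially the same as the paper's: condition on $\mathbf{h}$, replace the log by a regime-specific expansion, reduce to a tractable function of $(\eta_F,\tau_F)$, and solve the two first-order conditions, using the sequential back-substitution you describe at the end. Two technical points, however, are handled more explicitly in the paper than in your sketch.

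First, you write that the outer expectation ``can be split so that $\mathbb{E}[\lVert\hat{\mathbf{h}}_{UT}\rVert^2\mid\mathbf{h}]$ enters the pre-log cleanly.'' This is not automatic: the pre-log and the log are correlated through $\hat{\mathbf w}_{UT}$, which sits in both $\lVert\hat{\mathbf h}_{UT}\rVert^2$ and $\hat{\mathbf h}_{AP}$. The paper deals with this by an explicit approximate-independence assumption at high SNR (the variance of $\hat{\mathbf w}_{UT}$ is small), and only then takes the expectations separately. You should flag this step.

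Second, and more substantively, in the low-SNR case your claim that $\log_2(1+\gamma)\approx\gamma\log_2 e$ ``factorizes the expectation and turns the objective into an explicit rational function'' is too quick. Conditionally on $\mathbf h$, $\gamma$ is a ratio of correlated (non-central) chi-squared variables, so $\mathbb{E}[\gamma\mid\mathbf h]$ does not reduce to a clean rational function of $(\eta_F,\tau_F)$. The paper inserts an extra Jensen step, $\mathbb{E}[\gamma]\ge\exp(\mathbb{E}[\ln\gamma])$, which separates the numerator and denominator into $\mathbb{E}[\ln\Phi_1]-\mathbb{E}[\ln\Phi_2]$; a Taylor expansion of each log at its mean then yields the rational expression that is subsequently differentiated. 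Without this (or an equivalent device) your low-SNR derivation will stall before reaching the quadratics that produce \eqref{Eq: FDD_opt_low_SNR-1}--\eqref{Eq: FDD_opt_low_SNR-2}.
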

\begin{proof}
See Appendix-\ref{Apx: FDD eta over all channels}.
\end{proof}
Despite the complexity of \eqref{Eq: FDD_opt_high_SNR-1}, \eqref{Eq: FDD_opt_high_SNR-2}, \eqref{Eq: FDD_opt_low_SNR-1} and \eqref{Eq: FDD_opt_low_SNR-2}, some interesting insights can drawn from the approximations adopted in the derivation in Appendix-\ref{Apx: FDD eta over all channels} following an approach similar to what has been done for Lemma \ref{Lem: TDD eta over all channels}. As before, the impact of $P_D$ on the accuracy of the results is likely negligible, due the fact that $P \gg P_D$ by construction. Now, consider the quantities $\lambda_1=\frac{2T_C\lVert\mathbf{h}\rVert^2}{N_0L\left(\frac{1}{\eta_F P}+\frac{1}{\tau_F P_F}\right)}$ and $\lambda_2=\frac{2\eta_F T_CP\lVert\mathbf{h}\rVert^2}{N_0L}$, introduced in \eqref{Eq: FDD eta high SNR neglect fraction-2} and \eqref{Eq: FDD eta low SNR Taylor series-2} respectively. We first focus on the low SNR case. Therein, the approximations $\lambda_1, \lambda_2 \approx 0$ are adopted. In this case, a smaller $P_F$ improves the accuracy of these approximations, whereas no clear insight can be drawn for $L$. Conversely, in the high SNR case, the approximation $\lambda_1 \gg 0$ is adopted. Differently from the previous case, the accuracy of this approximation increases with $P_F$. A further approximation is introduced in this part of the study, i.e., $\frac{N_0L^2}{\eta_FPT_C} \approx 0$ in \eqref{Eq: FDD eta high SNR neglect fraction-1}. Accordingly, an additional insight on the impact of the number of antennas on the accuracy of the result in Lemma 2 can be drawn, i.e., the smaller $L$ the larger the accuracy. Interestingly, this is in contrast with the impact of the same parameter in the TDD case and highlights the expected larger penalty for CSI acquisition that FDD pays w.r.t. TDD as the number of antennas grows. A more detailed discussion on its impact on the accuracy of the results in Lemma \ref{Lem: FDD eta over all channels}, is deferred to  Sec.~\ref{Sec: Numerical results}, where a comparative study of the downlink rate of the three considered schemes is provided.

\section{Analysis of the Outage Probability}\label{Sec: Outage probability}

In this section, we will study the outage probability for the considered system as a function of the parameters introduced so far, and the downlink rate. In the considered practical SWIPT implementation two possible outage events can occur:
\begin{itemize}
\item The harvested energy is not sufficient for all the operations at the UT (channel estimation, pilot transmission/CSI feedback and information decoding), i.e., the UT experiences an \textit{energy shortage}. 
\item The harvested energy is sufficient to perform all the operations at the UT, but the achieved downlink rate is smaller than a target value, i.e., the UT experiences a \textit{data outage}. 
\end{itemize}
We first focus on the case for which energy shortage occurs. Subsequently, we analyze the case for which the harvested energy is sufficient for all the operations at the UT, and compute the data outage probabilities for the three transmit schemes considered in this work. Before we proceed, we remark that, the analytic expressions derived in this section for the outage probabilities as a function of the system parameters are very complicated, and straightforward inference on their behavior is difficult to be drawn. Consequently, as before we defer the discussion on the outage as a function of the system parameters for all the cases considered in this work to Sec.~\ref{Sec: Numerical results}.

\subsection{Energy Shortage Probability} \label{Sec: Energy shortage}

\subsubsection{Non-CSI Scheme}

Referring to \eqref{Eq: Non-Time portion alpha}, for any given value for $\alpha_N$, the energy shortage probability for the non-CSI case can be expressed mathematically as
\begin{align}
\mathcal{P}_{N}^{E, out}\left(\alpha_N\right)&=\Pr\left\{\frac{\alpha_N\beta P\lVert\mathbf{h}\rVert^2}{L}<\left(1-\alpha_N\right)P_D\right\}\notag
\\&=\frac{\gamma\left(L,\frac{\left(1-\alpha_N\right)LP_D}{\alpha_N\beta P}\right)}{\Gamma\left(L\right)},\label{Eq: Non-Energy outage closed-form}
\end{align}
where $\gamma(q,r)=\int_0^{r}u^{q-1}e^{-u}\mathrm{d}u$ is the lower incomplete Gamma function. The closed-form expression of this probability is derived by considering the cumulative distribution function (CDF) of $\chi^2_{2L}$ if we note that $2\lVert\mathbf{h}\rVert^2\sim\chi^2_{2L}$.

\subsubsection{TDD Scheme}

Referring to \eqref{Eq: TDD-Time portion alpha}, for any given value for $\alpha_T$ and $\eta_T$, the energy shortage probability for the TDD case,  denoted by $\mathcal{P}_{T}^{E, out}\left(\alpha_T,\eta_T\right)$, can be expressed mathematically as
\begin{align}
&\mathcal{P}_{T}^{E, out}\left(\alpha_T,\eta_T\right)\notag
\\&\quad=\Pr\bigg\{\frac{\alpha_T\beta P\lVert\mathbf{h}\rVert^2}{L}<\left(1-\alpha_T-\eta_T\right)P_D+\eta_TP_E\bigg\}\notag
\\&\quad=\frac{\gamma\left(L,\frac{\eta_T LP_E+\left(1-\alpha_T-\eta_T\right)LP_D}{\alpha_T\beta P}\right)}{\Gamma\left(L\right)}.\label{Eq: TDD power outage probability}
\end{align}
Using the same approach as for \eqref{Eq: Non-Energy outage closed-form}, the closed-form expression of the probability in \eqref{Eq: TDD power outage probability} is computed.

\subsubsection{FDD Scheme}

Consider \eqref{Eq: FDD-Time portion alpha}. For any given value for $\alpha_F$, $\eta_F$ and $\tau_F$, the energy shortage probability for the FDD case can be stated mathematically as
\begin{align}
\mathcal{P}_{F}^{E, out}\left(\alpha_F,\eta_F,\tau_F\right)=\Pr\bigg\{
&\frac{\alpha_F\beta P\lVert\mathbf{h}\rVert^2}{L}<\frac{\tau_FP_F\lVert\mathbf{\hat{h}}_{UT}\rVert^2}{L}
\notag
\\&+\left(1-\alpha_F-\tau_F\right)P_D\bigg\}.\label{Eq: FDD-Energy outage-1}
\end{align}
The following result provides a closed-form expression of \eqref{Eq: FDD-Energy outage-1}. However, for the sake of the simplicity of the representation of the result, let us denote $\sigma_1=\sqrt{\frac{N_0L}{2\eta_F T_CP}}$, $\rho_1=\frac{\sqrt{2}\tau_F P_F}{\alpha_F\beta P-\tau_F P_F}$,
$\rho_2=\sqrt{\frac{2\tau_F P_F}{\alpha_F\beta P-\tau_F P_F}+\frac{2\tau_F^2P_F^2}{\left(\alpha_F\beta P-\tau_F P_F\right)^2}}$, and
$\rho_3=\sqrt{\frac{2\left(1-\alpha_F-\tau_F\right)LP_D}{\alpha_F\beta P-\tau_F P_F}}$.
\begin{lemma}\label{Lem: FDD power outage probability}
The energy shortage probability for the FDD scheme, as in \eqref{Eq: FDD-Energy outage-1}, can be computed as
\begin{align}
&\mathcal{P}_{F}^{E, out}\left(\alpha_F,\eta_F,\tau_F\right)\notag
\\&\quad=1-\int_{\theta_4 = 0}^{\infty}\frac{Q_{L}\left(\sqrt{\rho_1^2\sigma_1^2\theta_4},\sqrt{\rho_2^2\sigma_1^2\theta_4+\rho_3^2}\right)}{e^{\frac{\theta_4}{2}}\theta_4^{1-L}2^{L}\Gamma\left(L\right)}
\mathrm{d}\theta_4.\label{Eq: FDD power outage closed form}
\end{align}
\end{lemma}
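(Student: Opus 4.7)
My plan is to express the complementary (no-outage) event in \eqref{Eq: FDD-Energy outage-1} as the upper-tail probability of a single non-central chi-squared variable, so that one Marcum-$Q$ appears under the integrand of \eqref{Eq: FDD power outage closed form}; the key step is to condition on the estimation error $\hat{\mathbf{w}}_{UT}$ rather than on the channel $\mathbf{h}$. The motivation is that $\mathbf{h}$ and $\hat{\mathbf{w}}_{UT}$ are mutually independent with $\mathbf{h}\sim\mathcal{CN}(\mathbf{0},\mathbf{I}_L)$, so once $\hat{\mathbf{w}}_{UT}$ is fixed the outage inequality becomes a quadratic form in a single Gaussian vector, which I can diagonalize by completing the square.

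Introducing the shorthand $a=\alpha_F\beta P$, $b=\tau_F P_F$, and $c=(1-\alpha_F-\tau_F)LP_D$, and assuming $a>b$ (implicit in the real-valuedness of $\rho_2$ and $\rho_3$), I would expand $\|\hat{\mathbf{h}}_{UT}\|^2=\|\mathbf{h}+\hat{\mathbf{w}}_{UT}\|^2$ and complete the square in $\mathbf{h}$ to obtain the identity $a\|\mathbf{h}\|^2-b\|\hat{\mathbf{h}}_{UT}\|^2 = (a-b)\|\mathbf{h}-\tfrac{b}{a-b}\hat{\mathbf{w}}_{UT}\|^2 - \tfrac{ab}{a-b}\|\hat{\mathbf{w}}_{UT}\|^2$. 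The no-outage event $a\|\mathbf{h}\|^2\geq b\|\hat{\mathbf{h}}_{UT}\|^2+c$ then reduces to the single inequality $\|\mathbf{h}-\tfrac{b}{a-b}\hat{\mathbf{w}}_{UT}\|^2 \geq \tfrac{c}{a-b}+\tfrac{ab}{(a-b)^2}\|\hat{\mathbf{w}}_{UT}\|^2$.

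Given $\hat{\mathbf{w}}_{UT}$, the shifted vector $\mathbf{h}-\tfrac{b}{a-b}\hat{\mathbf{w}}_{UT}$ is distributed as $\mathcal{CN}(-\tfrac{b}{a-b}\hat{\mathbf{w}}_{UT},\mathbf{I}_L)$, so twice its squared norm follows $\chi_{2L}^{'2}\!\left(\tfrac{2b^2}{(a-b)^2}\|\hat{\mathbf{w}}_{UT}\|^2\right)$; the conditional no-outage probability is therefore immediately the Marcum $Q$-function $Q_L(q_w,r_w)$ with $q_w^2=\tfrac{2b^2}{(a-b)^2}\|\hat{\mathbf{w}}_{UT}\|^2$ and $r_w^2=\tfrac{2c}{a-b}+\tfrac{2ab}{(a-b)^2}\|\hat{\mathbf{w}}_{UT}\|^2$. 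Switching to the normalized variable $\theta_4=\|\hat{\mathbf{w}}_{UT}\|^2/\sigma_1^2$, which is $\chi_{2L}^{2}$-distributed with exactly the density $\theta_4^{L-1}e^{-\theta_4/2}/(2^L\Gamma(L))$ appearing in the integrand of \eqref{Eq: FDD power outage closed form}, the coefficients $2b^2/(a-b)^2$, $2ab/(a-b)^2$, and $2c/(a-b)$ collapse onto $\rho_1^2$, $\rho_2^2$, and $\rho_3^2$, respectively. Taking the expectation of the conditional Marcum-$Q$ over $\theta_4$ and subtracting from $1$ then yields the claimed formula.

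The main obstacle is identifying this ``wrong-way'' conditioning. The apparently natural choice of conditioning on $\mathbf{h}$ and invoking the law $\|\hat{\mathbf{h}}_{UT}\|^2/\sigma_1^2\mid\mathbf{h}\sim\chi_{2L}^{'2}(\|\mathbf{h}\|^2/\sigma_1^2)$ recasts the no-outage event as a CDF (rather than a CCDF) of a non-central chi-squared, further restricted to the region $\|\mathbf{h}\|^2\geq c/a$, and it therefore cannot be placed under a single unrestricted Marcum-$Q$ integrand against the density of $\|\mathbf{h}\|^2$. Conditioning instead on $\hat{\mathbf{w}}_{UT}$ and then completing the square in $\mathbf{h}$ is precisely what makes the Marcum $Q$ appear cleanly against a central $\chi_{2L}^{2}$ density on $[0,\infty)$, matching \eqref{Eq: FDD power outage closed form}.
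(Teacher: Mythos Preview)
Your proposal is correct and follows essentially the same route as the paper: condition on $\hat{\mathbf{w}}_{UT}$, complete the square so that the (no-)outage event becomes a single threshold on $\|\sqrt{2}\,\mathbf{h}-\rho_1\hat{\mathbf{w}}_{UT}\|^2$ (equivalently your $\|\mathbf{h}-\tfrac{b}{a-b}\hat{\mathbf{w}}_{UT}\|^2$), identify this as non-central $\chi^{'2}_{2L}$ given $\hat{\mathbf{w}}_{UT}$ to obtain the Marcum $Q_L$, and then average against the central $\chi^2_{2L}$ density of $\|\hat{\mathbf{w}}_{UT}\|^2/\sigma_1^2$. The paper's proof differs only cosmetically in the $\sqrt{2}$ normalization and omits your discussion of why conditioning on $\mathbf{h}$ would be less convenient.
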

\begin{proof}
The outage probability can be evaluated as follows. First, applying the law of total probability, i.e., given a random variable $A$,
$\Pr\left(\cdot\right)=\mathbb{E}_{A}\left[\Pr\left(\cdot|A\right)\right]$, we have
\begin{align}
&\eqref{Eq: FDD-Energy outage-1}=\mathbb{E}_{\mathbf{\hat{w}}_{UT}}\Big[\Pr\Big\{
\lVert\sqrt{2}\mathbf{h}-\rho_1\mathbf{\hat{w}}_{UT}\rVert^2\notag
\\&\hspace{10em}<
\rho_2^2\lVert\mathbf{\hat{w}}_{UT}\rVert^2
+\rho_3^2\big|\mathbf{\hat{w}}_{UT}
\Big\}\Big].\label{Eq: App4-Outage probability-1}
\end{align}
From \eqref{Eq: App4-Outage probability-1}, it can be easily deduced that $\lVert\sqrt{2}\mathbf{h}-\rho_1\mathbf{\hat{w}}_{UT}\rVert^2\big|_{\mathbf{\hat{w}}_{UT}}\sim\chi^{'2}_{2L}\left(\rho_1^2\lVert\mathbf{\hat{w}}_{UT}\rVert^2\right)$. Therefore, substituting the PDF of $\lVert\sqrt{2}\mathbf{h}-\rho_1\mathbf{\hat{w}}_{UT}\rVert^2\big|_{\mathbf{\hat{w}}_{UT}}$ into \eqref{Eq: App4-Outage probability-1}, we can rewrite 
\begin{equation}
\eqref{Eq: App4-Outage probability-1} = 1-\mathbb{E}_{\mathbf{\hat{w}}_{UT}}\left[Q_{L}\left(\rho_1\lVert\mathbf{\hat{w}}_{UT}\rVert,\sqrt{\rho_2^2\lVert\mathbf{\hat{w}}_{UT}\rVert^2+\rho_3^2}\right)\right].\label{Eq: App4-Outage probability-2}
\end{equation}
Since $\mathbf{\hat{w}}_{UT}\sim\mathcal{C}\mathcal{N}\left(\mathbf{0},2\sigma_1^2\mathbf{I}_L\right)$, we have $\lVert\mathbf{\hat{w}}_{UT}\rVert^2=\sigma_1^2\Theta_4$, where $\Theta_4\sim\chi^2_{2L}$. Substituting the PDF of $\lVert\mathbf{\hat{w}}_{UT}\rVert^2$ into \eqref{Eq: App4-Outage probability-2}, we derive the RHS of \eqref{Eq: FDD power outage closed form}, and this concludes the proof.

\end{proof}
At this stage, if we focus on \eqref{Eq: Non-Energy outage closed-form}, \eqref{Eq: TDD power outage probability}, and \eqref{Eq: FDD power outage closed form}, we note that the energy shortage probability in the three considered cases clearly depends on the values of $\alpha_N$, $(\alpha_T,\eta_T)$, and $(\alpha_F,\eta_F,\tau_F)$ respectively. However, drawing meaningful insights from these results is extremely difficult, due to their complexity. Accordingly, we will investigate this aspect in Sec.~\ref{Sec: Numerical results}, by means of suitable numerical analyses.

\subsection{Data Outage Probability for the Non-CSI Scheme}\label{sec:data outafe for non-CSI}

We now compute the data outage probability for the non-CSI scheme. Given $\alpha_N$ and a specific target downlink rate $R_{NC}$, the data outage probability can be stated mathematically as
\begin{align}
&\mathcal{P}_{N}^{D, out}\left(\alpha_N,R_{NC}\right)=\Pr\bigg\{\frac{\alpha_N\beta P\lVert\mathbf{h}\rVert^2}{L}\geq\left(1-\alpha_N\right)P_D,\notag
\\&\hspace{4em}\left(1-\alpha_N\right)\log_2\left(1+\frac{P\lVert\mathbf{h}\rVert^2}{N_0L}\right)<R_{NC}\bigg\},\notag
\end{align}
that is the probability that the harvested energy is sufficient for the decoding operations at the UT, but the achieved downlink rate is smaller than $R_{NC}$. Now, let us rewrite $\mathcal{P}_{N}^{D, out}$ as
\begin{align}
&\hspace{-0.7em}\Pr\left\{\frac{\left(1-\alpha_N\right)LP_D}{\alpha_N\beta P}\leq
\lVert\mathbf{h}\rVert^2<\frac{N_0L}{P}\left(2^{\frac{R_{NC}}{1-\alpha_N}}-1\right)
\right\}.\label{Eq: Non-Data outage closed-form-1}
\end{align}
The intersection between the two events in \eqref{Eq: Non-Data outage closed-form-1} is non-empty when
\begin{equation} \label{eq:condition_non_CSI}
\frac{\left(1-\alpha_N\right)LP_D}{\alpha_N\beta P}<\frac{N_0L}{P}\left(2^{\frac{R_{NC}}{1-\alpha_N}}-1\right).
\end{equation}
If this condition is not satisfied, then \eqref{Eq: Non-Data outage closed-form-1} in this case is equal to 0. In is worth noting that, assuming $R_{NC}\neq0$, the data outage probability would be 0 only in case of extremely low value of $N_0$, given that typically $P \gg P_D$, as previously discussed. This is in line with what could be expected in a wireless communication system, in which the data outage probability tends to 0 as the SNR at the receiver increases. If this is not the case, and \eqref{eq:condition_non_CSI} is satisfied, then \eqref{Eq: Non-Data outage closed-form-1} can be computed as
\begin{align}
&\mathcal{P}_{N}^{D, out}\left(\alpha_N,R_{NC}\right)\notag
\\&=\frac{\gamma\left(L,\frac{N_0L}{P}\left(2^{\frac{R_{NC}}{1-\alpha_N}}-1\right)\right)}{\Gamma\left(L\right)}-\frac{\gamma\left(L,\frac{\left(1-\alpha_N\right)LP_D}{\alpha_N\beta P}\right)}{\Gamma\left(L\right)},\label{Eq: Non-Data outage closed-form-2}
\end{align}
where we made use of the CDF of the $\chi^2_{2L}$ distribution.

\subsection{Data Outage Probability for the TDD Scheme}\label{sec:data outafe for TDD}

We switch our focus back to the TDD scheme. For given values of $\alpha_T$, $\eta_T$, and a target downlink rate $R_T$, the data outage probability is expressed as
\begin{align}
&\hspace{-0.7em}\mathcal{P}_{T}^{D, out}\left(\alpha_T,\eta_T,R_{T}\right)=\Pr\bigg\{\frac{\alpha_T\beta P\lVert\mathbf{h}\rVert^2}{L}\geq\left(1-\alpha_T-\eta_T\right)P_D\notag
\\&\hspace{-0.7em}+\eta_TP_E,\left(1-\alpha_T-\eta_T\right)\log_2\left(1+\frac{P\lvert\mathbf{h}^{\dag}\mathbf{\hat{h}}\rvert^2}{N_0\lVert\mathbf{\hat{h}}\rVert^2}\right)<R_T
\bigg\},\label{Eq: TDD-Data outage}
\end{align}
that is the probability that the harvested energy is sufficient to engage in the pilots transmission and decode the received data, but the achieved downlink rate is smaller than $R_T$. 
The following result provides a closed-form expression for \eqref{Eq: TDD-Data outage} and concludes the study of the TDD case. However, before proceeding, let us denote $b_3=\frac{N_0}{P}\left(2^{\frac{R_T}{1-\alpha_T-\eta_T}}-1\right)$, $b_4=\frac{\eta_T LP_E+\left(1-\alpha_T-\eta_T\right)LP_D}{\alpha_T\beta P}$, $b_5=\frac{N_0+\eta_T T_CP_E}{N_0}$ and $b_6=\frac{N_0}{\eta_T T_CP_E}$, for the sake of the simplicity of the representation of the result.
\begin{lemma}\label{Lem: TDD data outage probability}
When $\frac{N_0}{P}\left(2^{\frac{R_T}{1-\alpha_T-\eta_T}}-1\right)<\frac{\eta_T LP_E+\left(1-\alpha_T-\eta_T\right)LP_D}{\alpha_T\beta P}$,
then the data outage probability for the TDD scheme, as in \eqref{Eq: TDD-Data outage}, can be computed as
\begin{align}
&\mathcal{P}_{T}^{D, out}=\int_{\theta_3 = 0}^{\infty}\int_{\theta_1 = 0}^{2b_5 b_3}\frac{\Gamma\left(L-1,b_5 b_4-\frac{\theta_1}{2}\right)
\theta_3^{L-1}}{2^{L+1}\Gamma\left(L-1\right)\Gamma\left(L\right)}\notag
\\&\qquad\times I_0\left(\sqrt{\frac{\theta_1 \theta_3}{b_6}}\right)e^{-\left(\frac{\theta_1}{2}+\frac{\theta_3}{2b_6}+\frac{\theta_3}{2}\right)}\mathrm{d}\theta_1\mathrm{d}\theta_3.\label{Eq: TDD data outage probability closed form 1}
\end{align}
Conversely, when $\frac{N_0}{P}\left(2^{\frac{R_T}{1-\alpha_T-\eta_T}}-1\right)\geq\frac{\eta_T LP_E+\left(1-\alpha_T-\eta_T\right)LP_D}{\alpha_T\beta P}$, it can be computed as
\begin{align}
&\mathcal{P}_{T}^{D, out}=\int_{\theta_3 = 0}^{\infty}\int_{\theta_1 = 0}^{2b_5 b_4}\frac{\Gamma\left(L-1,b_5 b_4-\frac{\theta_1}{2}\right)
I_0\left(\sqrt{\frac{\theta_1 \theta_3}{b_6}}\right)}{2^{L+1}\Gamma\left(L-1\right)\Gamma\left(L\right)}\notag
\\&\times \theta_3^{L-1}e^{-\left(\frac{\theta_1}{2}+\frac{\theta_3}{2b_6}+\frac{\theta_3}{2}\right)}\mathrm{d}\theta_1\mathrm{d}\theta_3+\int_{\theta_3 = 0}^{\infty}e^{-\frac{\theta_3}{2}}\theta_3^{L-1}\notag
\\&\times
\frac{\left(Q_{1}\left(\sqrt{\frac{\theta_3}{b_6}},\sqrt{2b_5 b_4}\right)
-Q_{1}\left(\sqrt{\frac{\theta_3}{b_6}},\sqrt{2b_5 b_3}\right)\right)}{2^L\Gamma\left(L\right)}\mathrm{d}\theta_3.\label{Eq: TDD data outage probability closed form 2}
\end{align}
\end{lemma}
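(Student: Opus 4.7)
My plan is to recast the joint event in \eqref{Eq: TDD-Data outage} in terms of three scalar statistics with a factorizable joint density, integrate these against the standard non-central chi-squared and Marcum-$Q$ machinery, and let the two cases of the lemma arise by comparing the rate and energy thresholds $b_3$ and $b_4$.

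First I would rewrite the two simultaneous conditions as $\lVert\mathbf{h}\rVert^2\geq b_4$ and $|\mathbf{h}^{\dag}\mathbf{\hat{h}}|^2/\lVert\mathbf{\hat{h}}\rVert^2<b_3$, using the definitions of $b_3,b_4$ given above the statement (and noting that $b_6$ coincides with the variance $N_0/(\eta_T T_C P_E)$ of $\bar{\mathbf{w}}$ in \eqref{Eq: TDD-Estimated channel at AP}). Since $\mathbf{h}\sim\mathcal{C}\mathcal{N}(\mathbf{0},\mathbf{I}_L)$ and $\bar{\mathbf{w}}\sim\mathcal{C}\mathcal{N}(\mathbf{0},b_6\mathbf{I}_L)$ are independent, the pair $(\mathbf{h},\mathbf{\hat{h}})$ is jointly Gaussian, and the MMSE decomposition $\mathbf{h}=\mathbf{\hat{h}}/(1+b_6)+\mathbf{e}$, with $\mathbf{e}\sim\mathcal{C}\mathcal{N}(\mathbf{0},\tfrac{b_6}{1+b_6}\mathbf{I}_L)$ independent of $\mathbf{\hat{h}}$, sets up a clean conditioning. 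Splitting $\mathbf{h}=a\,\mathbf{\hat{h}}/\lVert\mathbf{\hat{h}}\rVert+\mathbf{h}_{\perp}$ with $a=\mathbf{h}^{\dag}\mathbf{\hat{h}}/\lVert\mathbf{\hat{h}}\rVert$, the isotropy of $\mathbf{e}$ makes $a$ and $\mathbf{h}_{\perp}$ conditionally independent given $\mathbf{\hat{h}}$; the rate event then reads $|a|^2<b_3$ and, via $\lVert\mathbf{h}\rVert^2=|a|^2+\lVert\mathbf{h}_{\perp}\rVert^2$, the energy event reads $|a|^2+\lVert\mathbf{h}_{\perp}\rVert^2\geq b_4$.

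Next I would introduce the scaled statistics $\theta_3=2\lVert\mathbf{\hat{h}}\rVert^2/(1+b_6)$, $\theta_1=2(1+b_6)|a|^2/b_6$, and $W=2(1+b_6)\lVert\mathbf{h}_{\perp}\rVert^2/b_6$. Rotating coordinates so that $\mathbf{\hat{h}}$ aligns with a basis vector yields $\theta_3\sim\chi_{2L}^{2}$, $\theta_1\mid\theta_3\sim\chi_2^{'2}(\theta_3/b_6)$ — whose PDF supplies exactly the $I_0(\sqrt{\theta_1\theta_3/b_6})$ factor visible in the statement — and $W\sim\chi_{2(L-1)}^{2}$ independent of $(\theta_1,\theta_3)$. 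Under these substitutions the two events become $\theta_1+W\geq 2 b_5 b_4$ and $\theta_1<2 b_5 b_3$, where $2 b_5=2(1+b_6)/b_6=2(N_0+\eta_T T_C P_E)/N_0$ matches the $b_5$ of the statement.

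Finally I would integrate the joint density $f_{\theta_3}\,f_{\theta_1\mid\theta_3}\,f_W$ over the intersection region. Performing the $W$-integral first, the CCDF of $\chi_{2(L-1)}^{2}$ returns $\Gamma(L-1,b_5 b_4-\theta_1/2)/\Gamma(L-1)$ when $\theta_1<2 b_5 b_4$ and returns $1$ when $\theta_1\geq 2 b_5 b_4$. Comparing $b_3$ with $b_4$ then produces the two cases: if $b_3<b_4$, the rate constraint $\theta_1<2 b_5 b_3$ already forces $\theta_1<2 b_5 b_4$, only the incomplete-Gamma regime contributes, and collecting the factors $f_{\theta_3}\,f_{\theta_1\mid\theta_3}$ yields \eqref{Eq: TDD data outage probability closed form 1}; if $b_3\geq b_4$, the $\theta_1$-range splits at $2 b_5 b_4$, and the tail piece $\theta_1\in[2b_5 b_4,2b_5 b_3)$ evaluates to the difference of Marcum-$Q$ terms through the identity $F_{\chi_2^{'2}(\nu)}(x)=1-Q_1(\sqrt{\nu},\sqrt{x})$, producing \eqref{Eq: TDD data outage probability closed form 2}. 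The delicate step will be the scaling bookkeeping: verifying that the factors $2(1+b_6)/b_6$ and $2/(1+b_6)$ combine so that exactly $2 b_5 b_4$ appears for the combined energy event and $\theta_3/b_6$ for the non-centrality parameter of $\theta_1\mid\theta_3$, since any slip in those constants would misalign the integrands with the closed forms in the lemma.
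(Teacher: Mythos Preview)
Your proposal is correct and follows essentially the same route as the paper: condition on $\mathbf{\hat h}$, project $\mathbf{h}$ onto $\mathbf{\hat h}/\lVert\mathbf{\hat h}\rVert$ and its orthogonal complement to obtain the independent pair $(\Theta_1,\Theta_2)$ with $\Theta_1\mid\mathbf{\hat h}\sim\chi_2^{'2}$ and $\Theta_2\sim\chi_{2(L-1)}^2$, then average over $\lVert\mathbf{\hat h}\rVert^2=\tfrac{1+b_6}{2}\Theta_3$ with $\Theta_3\sim\chi_{2L}^2$ and split on $b_3\lessgtr b_4$. Your explicit MMSE decomposition $\mathbf{h}=\mathbf{\hat h}/(1+b_6)+\mathbf{e}$ is just a restatement of the paper's conditional law $\mathbf{h}\mid\mathbf{\hat h}\sim\mathcal{CN}\bigl(\tfrac{1}{1+b_6}\mathbf{\hat h},b_5^{-1}\mathbf{I}_L\bigr)$, and your scalings $\theta_1=2b_5|a|^2$, $\theta_3=2\lVert\mathbf{\hat h}\rVert^2/(1+b_6)$ coincide with the paper's $\Theta_1,\Theta_3$ and yield the same non-centrality $\theta_3/b_6$.
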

\begin{proof}
See Appendix-\ref{Apx: TDD data outage probability}.
\end{proof}

\subsection{Data Outage Probability for the FDD Scheme}\label{sec:data outafe for FDD}
We conclude our study on the data outage probability by considering the FDD case. For given $\alpha_F$, $\eta_F$, $\tau_F$, and a specific target downlink rate $R_F$, the data outage probability can be stated mathematically as
\begin{align}
&\hspace{-0.5em}\mathcal{P}_{F}^{D, out}\left(\alpha_F,\eta_F,\tau_F,R_{F}\right)\notag
\\&\hspace{-0.5em}=\Pr\Bigg\{\frac{\alpha_F\beta P\lVert\mathbf{h}\rVert^2}{L}\geq\frac{\tau_FP_F\lVert\mathbf{\hat{h}}_{UT}\rVert^2}{L}
+\left(1-\alpha_F-\tau_F\right)P_D,\notag
\\&\hspace{-0.5em}\left(1-\alpha_F-\eta_F-\tau_F\right)\log_2\left(1+\frac{P\lvert\mathbf{h}^{\dag}\mathbf{\hat{h}}_{AP}\rvert^2}{N_0\lVert\mathbf{\hat{h}}_{AP}\rVert^2}\right)<R_F\Bigg\},\label{Eq: FDD-Data outage-1}
\end{align}
that is the probability that the harvested energy is sufficient to estimate the downlink channel, feed back its estimated version in the uplink, and decode the received data, but the achieved downlink rate is smaller than $R_F$. 
The following result provides a closed-form expression of \eqref{Eq: FDD-Data outage-1} and concludes the study of the FDD case. However, as before, let us introduce some new notation to further simplify representation of the results. Accordingly, we let $\sigma_2=\frac{N_0L+\eta_F PT_C}{N_0L}$, $\sigma_3=\frac{N_0L}{\eta_F PT_C}$, $\sigma_4=\frac{N_0L}{\tau_F P_FT_C}$, $\sigma_5=\frac{\left(1+\sigma_3\right)\sigma_4}{1+\sigma_3+\sigma_4}$, $b_7=\frac{N_0}{P}\left(2^{\frac{R_F}{1-\alpha_F-\eta_F-\tau_F}}-1\right)$, $b_8=\frac{\left(1-\alpha_F-\tau_F\right)LP_D}{\alpha_F\beta P}$, and $b_9=\frac{\tau_F P_F}{\alpha_F\beta P}$.
\begin{lemma}\label{Lem: FDD data outage probability}

The data outage probability for the FDD scheme, as in \eqref{Eq: FDD-Data outage-1}, can be computed as
\begin{align}
&\mathcal{P}_{F}^{D, out}=\int_{\theta_9 = 0}^{\infty}\int_{\theta_7+\theta_8>\frac{2\left(b_7-b_8\right)}{b_9\sigma_5}}\int_{\theta_5 = 0}^{2\sigma_2b_7}\frac{\theta_8^{L-2}\theta_9^{L-1}}{\Gamma\left(L\right)\Gamma\left(L-1\right)}\notag
\\&\times Q_{L-1}\left(\sqrt{\frac{\theta_8\sigma_5}{\sigma_2\sigma_3^2}},
\sqrt{2\sigma_2\left(b_8+\frac{b_9\left(\theta_7+\theta_8\right)\sigma_5}{2}\right)-\theta_5}\right)\notag
\\&\times\frac{I_0\left(\sqrt{\frac{\theta_5\theta_7\sigma_5}{\sigma_2\sigma_3^2}}\right)I_0\left(\sqrt{\frac{\theta_7\theta_9\left(1+\sigma_3\right)}{\sigma_4}}\right)}{2^{2L+1}\times e^{\left(\frac{\theta_7\sigma_5}{2\sigma_2\sigma_3^2}+\frac{\theta_5+\theta_7+\theta_8+\theta_9}{2}+\frac{\left(1+\sigma_3\right)z}{2\sigma_4}\right)}}\mathrm{d}\theta_5\mathrm{d}\theta_7\mathrm{d}\theta_8\mathrm{d}\theta_9\label{Eq: FDD-Data outage closed form-1}
\\&+\int_{\theta_9 = 0}^{\infty}\int_{\theta_7+\theta_8\leq\frac{2\left(b_7-b_8\right)}{b_9\sigma_5}}\int_{\theta_5 = 0}^{2\sigma_2 \left(b_8+\frac{b_9\left(\theta_7+\theta_8\right)\sigma_5}{2}\right)}\notag
\\&\quad Q_{L-1}\left(\sqrt{\frac{\theta_8\sigma_5}{\sigma_2\sigma_3^2}},
\sqrt{2\sigma_2\left(b_8+\frac{b_9\left(\theta_7+\theta_8\right)\sigma_5}{2}\right)-\theta_5}\right)\notag
\\&\times I_0\left(\sqrt{\frac{\theta_5\theta_7\sigma_5}{\sigma_2\sigma_3^2}}\right)
I_0\left(\sqrt{\frac{\theta_7\theta_9\left(1+\sigma_3\right)}{\sigma_4}}\right)\theta_8^{L-2}\theta_9^{L-1}\notag
\\&\times\frac{e^{-\left(\frac{\theta_7\sigma_5}{2\sigma_2\sigma_3^2}+\frac{\theta_5+\theta_7+\theta_8+\theta_9}{2}+\frac{\left(1+\sigma_3\right)\theta_9}{2\sigma_4}\right)}}{\Gamma\left(L-1\right)\Gamma\left(L\right)2^{2L+1}}\mathrm{d}\theta_5\mathrm{d}\theta_7\mathrm{d}\theta_8\mathrm{d}\theta_9\label{Eq: FDD-Data outage closed form-2}
\\&+\int_{\theta_9 = 0}^{\infty}\int_{\theta_7+\theta_8\leq\frac{2\left(b_7-b_8\right)}{b_9\sigma_5}}
\frac{I_0\left(\sqrt{\frac{\theta_7\theta_9\left(1+\sigma_3\right)}{\sigma_4}}\right)}{2^{2L}\times e^{\left(\frac{\left(1+\sigma_3\right)\theta_9}{2\sigma_4}+\frac{\theta_7+\theta_8+\theta_9}{2}\right)}}\notag
\\&\times\Bigg[Q_1\left(\sqrt{\frac{\theta_7\sigma_5}{\sigma_2\sigma_3^2}},\sqrt{2\sigma_2 \left(b_8+\frac{b_9\left(\theta_7+\theta_8\right)\sigma_5}{2}\right)}\right)\notag
\end{align}
\begin{align}&-Q_1\left(\sqrt{\frac{\theta_7\sigma_5}{\sigma_2\sigma_3^2}},\sqrt{2\sigma_2 b_7}\right)\Bigg]
\times \frac{\theta_8^{L-2}\theta_9^{L-1}}{\Gamma\left(L-1\right)\Gamma\left(L\right)}\mathrm{d}\theta_7\mathrm{d}\theta_8\mathrm{d}\theta_9.\label{Eq: FDD-Data outage closed form-3}
\end{align}
\end{lemma}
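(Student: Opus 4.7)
The plan is to evaluate \eqref{Eq: FDD-Data outage-1} by sequentially conditioning on the auxiliary vectors that couple the three parties. First I would rewrite the energy event as $\lVert\mathbf{h}\rVert^2 \geq b_8 + b_9\lVert\mathbf{\hat{h}}_{UT}\rVert^2$ and the rate event as $|\mathbf{h}^{\dag}\mathbf{\hat{h}}_{AP}|^2/\lVert\mathbf{\hat{h}}_{AP}\rVert^2 < b_7$, so that the two events couple $\mathbf{h}$ with $\mathbf{\hat{h}}_{UT}$ and with $\mathbf{\hat{h}}_{AP}=\mathbf{\hat{h}}_{UT}+\mathbf{\hat{w}}_{AP}$, respectively. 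Because $\mathbf{h}$, $\mathbf{\hat{w}}_{UT}$ and $\mathbf{\hat{w}}_{AP}$ are independent complex Gaussians with known element variances $1$, $\sigma_3$ and $\sigma_4$, the natural route is to view $\mathbf{\hat{h}}_{UT}$ and $\mathbf{\hat{h}}_{AP}$ as noisy measurements of $\mathbf{h}$ and to work with the MMSE-style decompositions $\mathbf{h}=\frac{\mathbf{\hat{h}}_{AP}}{1+\sigma_3+\sigma_4}+\tilde{\mathbf{v}}$ and $\mathbf{\hat{h}}_{UT}=\frac{(1+\sigma_3)\mathbf{\hat{h}}_{AP}}{1+\sigma_3+\sigma_4}+\tilde{\mathbf{u}}$, with $\tilde{\mathbf{v}}$ and $\tilde{\mathbf{u}}$ zero-mean Gaussian and independent of $\mathbf{\hat{h}}_{AP}$. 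The quantities $\sigma_2$ and $\sigma_5$ in the statement correspond precisely to the reciprocals of the resulting conditional variances.

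Next I would apply the law of total probability by conditioning on $\mathbf{\hat{h}}_{AP}$ and on the component of $\mathbf{\hat{h}}_{UT}$ lying in directions orthogonal to $\mathbf{\hat{h}}_{AP}$. Projecting $\mathbf{h}$ onto the beamforming direction $\mathbf{m}=\mathbf{\hat{h}}_{AP}/\lVert\mathbf{\hat{h}}_{AP}\rVert$, the scalar $|\mathbf{h}^{\dag}\mathbf{m}|^2$ becomes, up to a scaling by $\sigma_2$, a non-central chi-squared variable with $2$ degrees of freedom whose non-centrality is driven by $\lVert\mathbf{\hat{h}}_{AP}\rVert^2$; simultaneously $\lVert\mathbf{h}\rVert^2$ splits into this parallel piece plus an independent central chi-squared with $2(L-1)$ degrees of freedom supported on the hyperplane orthogonal to $\mathbf{m}$. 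After an analogous decomposition of $\mathbf{\hat{h}}_{UT}$ in the same orthonormal basis, the four scalars $\theta_5,\theta_7,\theta_8,\theta_9$ appearing in the statement will arise as (suitably normalized) squared values of the parallel part of $|\mathbf{h}^{\dag}\mathbf{m}|^2$, the parallel part of $\mathbf{\hat{h}}_{AP}$, the perpendicular part of $\mathbf{h}$, and the perpendicular part of $\mathbf{\hat{h}}_{UT}$, respectively, each following either $\chi^2_{2L}$, $\chi^2_{2(L-1)}$ or a non-central counterpart with known non-centrality.

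Once this change of variables is made, both events translate into linear inequalities: $\theta_5 < 2\sigma_2 b_7$ (rate) and $\theta_5 + (\text{perp.\ of }\lVert\mathbf{h}\rVert^2) \geq 2\sigma_2\bigl(b_8+\frac{b_9(\theta_7+\theta_8)\sigma_5}{2}\bigr)$ (energy). The three-way case split in \eqref{Eq: FDD-Data outage closed form-1}--\eqref{Eq: FDD-Data outage closed form-3} will then emerge from whether the upper rate bound $2\sigma_2 b_7$ lies above or below the energy lower bound $2\sigma_2\bigl(b_8+\frac{b_9(\theta_7+\theta_8)\sigma_5}{2}\bigr)$, equivalently whether $\theta_7+\theta_8$ exceeds the threshold $\frac{2(b_7-b_8)}{b_9\sigma_5}$. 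In the ``energy-binding'' regime the inner $\theta_5$ integration runs up to $2\sigma_2 b_7$ and is left explicit; in the ``rate-binding'' regime it runs up to the energy bound and produces the extra additive Marcum $Q_1$ piece obtained by applying the integral identity $Q_M(q,r)=\int_{r}^{\infty}\frac{\xi^M}{q^{M-1}}\exp\bigl(-\frac{\xi^2+q^2}{2}\bigr)I_{M-1}(q\xi)\,\mathrm{d}\xi$ recalled in Sec.~\ref{Sec: System Setup} to the tail of the perpendicular $\lVert\mathbf{h}\rVert^2$ integration.

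The main difficulty I anticipate is keeping the conditional joint distribution of $\mathbf{h}$, $\mathbf{\hat{h}}_{UT}$ and $\mathbf{\hat{h}}_{AP}$ consistent throughout the change of basis, in particular identifying the correct non-centrality parameters $\sqrt{\theta_8\sigma_5/(\sigma_2\sigma_3^2)}$, $\sqrt{\theta_5\theta_7\sigma_5/(\sigma_2\sigma_3^2)}$, and $\sqrt{\theta_7\theta_9(1+\sigma_3)/\sigma_4}$ that appear inside the Marcum $Q$ and $I_0$ kernels of the final expression. Matching these requires careful bookkeeping of which Gaussian piece is absorbed into which mean after the double projection; once this is pinned down, the remainder is a routine substitution of $\chi^2_{2L}$, $\chi^2_{2(L-1)}$ and $\chi'^2_{2}$ PDFs, together with the complementary CDF of a $\chi'^2_{2(L-1)}$ expressed via $Q_{L-1}(\cdot,\cdot)$, which directly yields the stated nested integrals \eqref{Eq: FDD-Data outage closed form-1}--\eqref{Eq: FDD-Data outage closed form-3}.
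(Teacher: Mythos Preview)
Your overall plan---condition, project onto the orthonormal basis aligned with $\mathbf{m}=\hat{\mathbf{h}}_{AP}/\lVert\hat{\mathbf{h}}_{AP}\rVert$, reduce to (non-)central chi-squared scalars, and split cases on whether the energy threshold exceeds the rate threshold---is exactly the route the paper takes in Appendix~\ref{Apx: FDD data outage probability}. However, two of the concrete claims you make about the resulting distributions are wrong, and they are precisely the ``bookkeeping'' steps you flagged as the main difficulty.

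First, the correct conditioning is on \emph{both} $\hat{\mathbf{h}}_{AP}$ and $\hat{\mathbf{h}}_{UT}$: by the Markov chain $\mathbf{h}\to\hat{\mathbf{h}}_{UT}\to\hat{\mathbf{h}}_{AP}$ one has $\mathbf{h}\mid_{\hat{\mathbf{h}}_{AP},\hat{\mathbf{h}}_{UT}}\sim\mathcal{CN}\bigl(\tfrac{1}{1+\sigma_3}\hat{\mathbf{h}}_{UT},\sigma_2^{-1}\mathbf{I}_L\bigr)$, so the conditional mean of $\mathbf{h}$ points along $\hat{\mathbf{h}}_{UT}$, not $\hat{\mathbf{h}}_{AP}$. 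Consequently the component of $\mathbf{h}$ orthogonal to $\mathbf{m}$ is \emph{non-central} $\chi'^{2}_{2L-2}$ (its non-centrality comes from the orthogonal part of $\hat{\mathbf{h}}_{UT}$), not central as you state; this is exactly what produces the $Q_{L-1}\bigl(\sqrt{\theta_8\sigma_5/(\sigma_2\sigma_3^2)},\cdot\bigr)$ factor. Likewise, the non-centrality of $|\mathbf{h}^{\dag}\mathbf{m}|^2$ is driven by $|\hat{\mathbf{h}}_{UT}^{\dag}\mathbf{m}|^2$, not by $\lVert\hat{\mathbf{h}}_{AP}\rVert^2$. Your marginal decompositions $\mathbf{h}=\hat{\mathbf{h}}_{AP}/(1+\sigma_3+\sigma_4)+\tilde{\mathbf{v}}$ and $\hat{\mathbf{h}}_{UT}=(1+\sigma_3)\hat{\mathbf{h}}_{AP}/(1+\sigma_3+\sigma_4)+\tilde{\mathbf{u}}$ are individually correct given $\hat{\mathbf{h}}_{AP}$ alone, but $\tilde{\mathbf{v}}$ and $\tilde{\mathbf{u}}$ are correlated, so treating them separately will not recover the joint law you need.

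Second, your identification of $\theta_7,\theta_8,\theta_9$ is permuted. In the paper's derivation: $\Theta_7$ is the (scaled) parallel part of $\hat{\mathbf{h}}_{UT}$ along $\mathbf{m}$ and is $\chi'^{2}_{2}\bigl(\tfrac{2\sigma_5}{\sigma_4^2}\lVert\hat{\mathbf{h}}_{AP}\rVert^2\bigr)$; $\Theta_8$ is the perpendicular part of $\hat{\mathbf{h}}_{UT}$ and is central $\chi^2_{2L-2}$ (whence the $\theta_8^{L-2}$ factor); and $\Theta_9$ carries $\lVert\hat{\mathbf{h}}_{AP}\rVert^2$ itself and is central $\chi^2_{2L}$ (whence $\theta_9^{L-1}$). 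The perpendicular part of $\mathbf{h}$ (your candidate for $\theta_8$) is the $\Theta_6$ that gets integrated out into the $Q_{L-1}$ term and never appears as an outer integration variable. Once you fix the conditioning order to $\hat{\mathbf{h}}_{AP}\to\hat{\mathbf{h}}_{UT}\mid\hat{\mathbf{h}}_{AP}\to\mathbf{h}\mid(\hat{\mathbf{h}}_{AP},\hat{\mathbf{h}}_{UT})$ and re-label accordingly, the rest of your plan goes through and matches the paper verbatim.
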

\begin{proof}
See Appendix-\ref{Apx: FDD data outage probability}.
\end{proof}

\section{Numerical Results}\label{Sec: Numerical results}

In this section we evaluate the performance of SWIPT for MISO systems, to assess its merit under the transmit schemes considered in this work. The parameters used in our numerical results are as follows. We consider $\beta=0.5$, which is a good approximation of the performance delivered by state-of-the-art commercial products \cite{prod:P2110}, and typically adopted value in the literature on this subject \cite{ art:zhang13, art:kaibin13, art:liu13}. We consider $P=1$ and $T_C=1000$ for simplicity, and $L\in\{3,6\}$. We assume that the system operates in the industrial, scientific and medical (ISM) band, i.e., carrier frequency of 2.4~GHz. Accordingly, we set a distance between the AP and the UT in the order of meters such that we can ensure that the latter is situated in the far-field region of the radiating AP. Furthermore, we assume that the signals transmitted by both the AP and the UT experience a generic path loss attenuation, with a path loss exponent equal to 3. As a consequence, we can safely let $\frac{P}{P_D}=1000$. The rationale for this is that by incorporating the propagation losses in $\frac{P}{P_D}$, we frame a more realistic scenario. Finally, we model the ratio between the power budgets available at the AP and the UT following the same logic. We let $\frac{P}{P_E}=\frac{P}{P_F}=100$, in accordance with the typical ratio between the available power budgets at both sides of the communication in modern networks, which is roughly $20$~dB \cite{rpt:3gpp.36.814}.

\subsection{Downlink Rate}

Now, we focus on the ergodic downlink rate. First, we compute the optimal numerical performance of the system by numerically solving the problems in \eqref{Eq: SecIII-TDD-rate optimal problem} and \eqref{Eq: SecIII-FDD-rate optimal problem}, by means of an exhaustive search whose complexity and time requirements are not suitable for realistic implementations. Subsequently, we evaluate the accuracy of our theoretical results by comparing them to the numerical performance results. Throughout this section, we will refer to the derived approximated parameters in Lemma \ref{Lem: TDD eta over all channels} and Lemma \ref{Lem: FDD eta over all channels} as analytic results, for the sake of clarity. Now, for the TDD scheme,  let $R^\star_T$ and $\eta^\star_T$ be the optimal downlink rate and the optimal duration of the portion of the coherence time devoted to the channel training/estimation, computed by extensive Monte-Carlo simulations. For the sake of clarity, with a little abuse of notation, we denote $\hat{\eta}^\star_T$ as the optimal parameter of interest for the TDD scheme, computed according to Lemma \ref{Lem: TDD eta over all channels}. For the FDD scheme, a similar notation is defined. Now, we define
$\zeta_T = \frac{R_T (\alpha_T, \hat{\eta}^\star_T)}{R^\star_T} \in [0,1]$, for TDD, and
$\zeta_F = \frac{R_F (\alpha_F,\hat{\eta}^\star_F,\hat{\tau}^\star_F)}{R^\star_F} \in [0,1]$, for FDD,
as the ratio between the downlink rate obtained with the analytic and optimal numerical results.\footnote{Note that, $\alpha_T$ and $\alpha_F$
are computed according to \eqref{Eq: TDD-Time portion alpha} and \eqref{Eq: FDD-Time portion alpha} respectively.} We let SNR $\in [0,30]$~dB and compute $\zeta_T$ and $\zeta_F$ for both $L=3$ and $L=6$ in Fig.~\ref{fig:TDD_L3}, Fig.~\ref{fig:TDD_L6}, Fig.~\ref{fig:FDD_L3}, and Fig.~\ref{fig:FDD_L6}.
\begin{figure}[!h]
	\centering
	\includegraphics[width=\columnwidth]{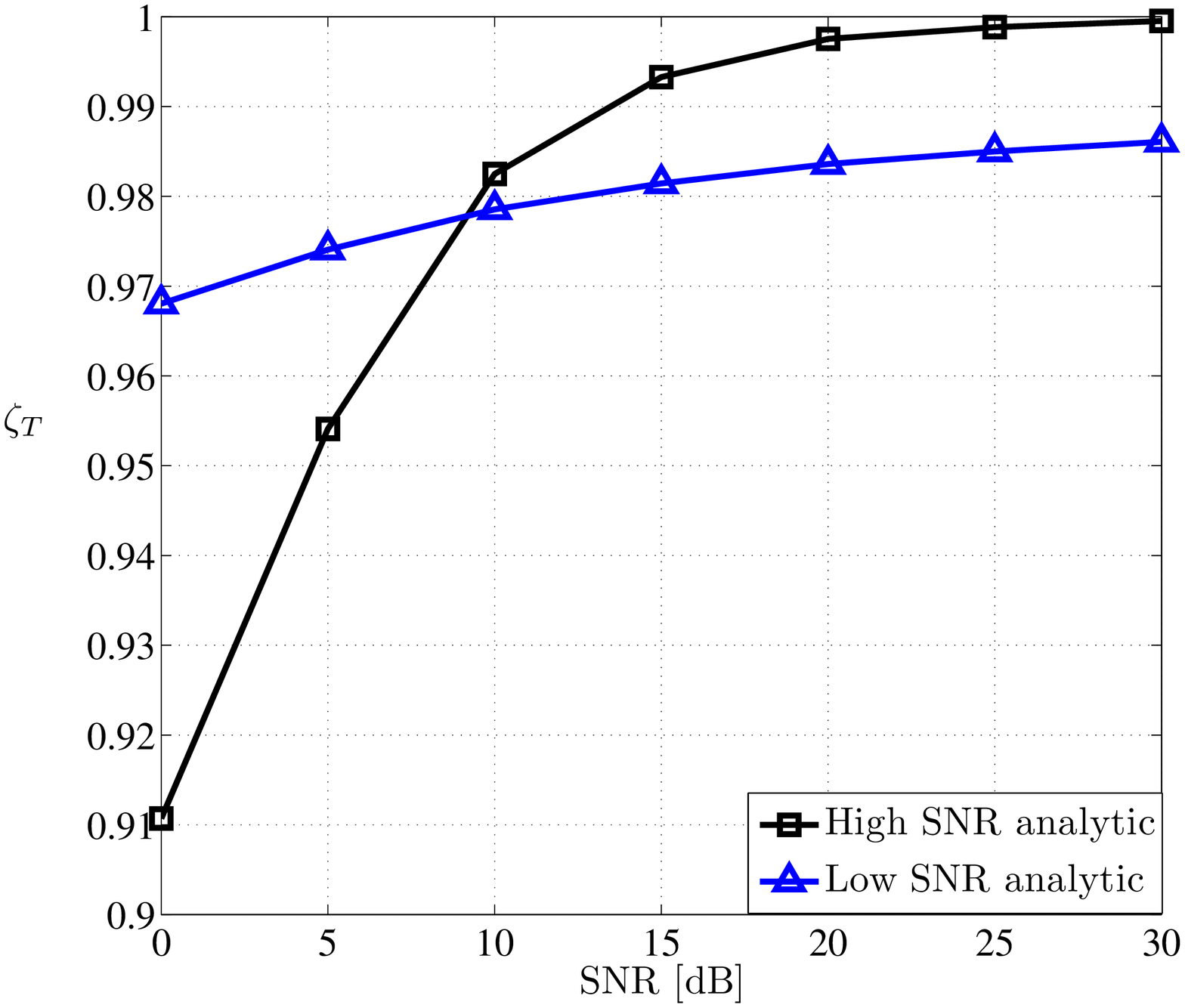}
	\caption{$\zeta_T$ for analytic and numerical parameters, TDD and $L=3$ antennas.}	
	\label{fig:TDD_L3}
\end{figure}
\begin{figure}[!h]
	\centering
	\includegraphics[width=\columnwidth]{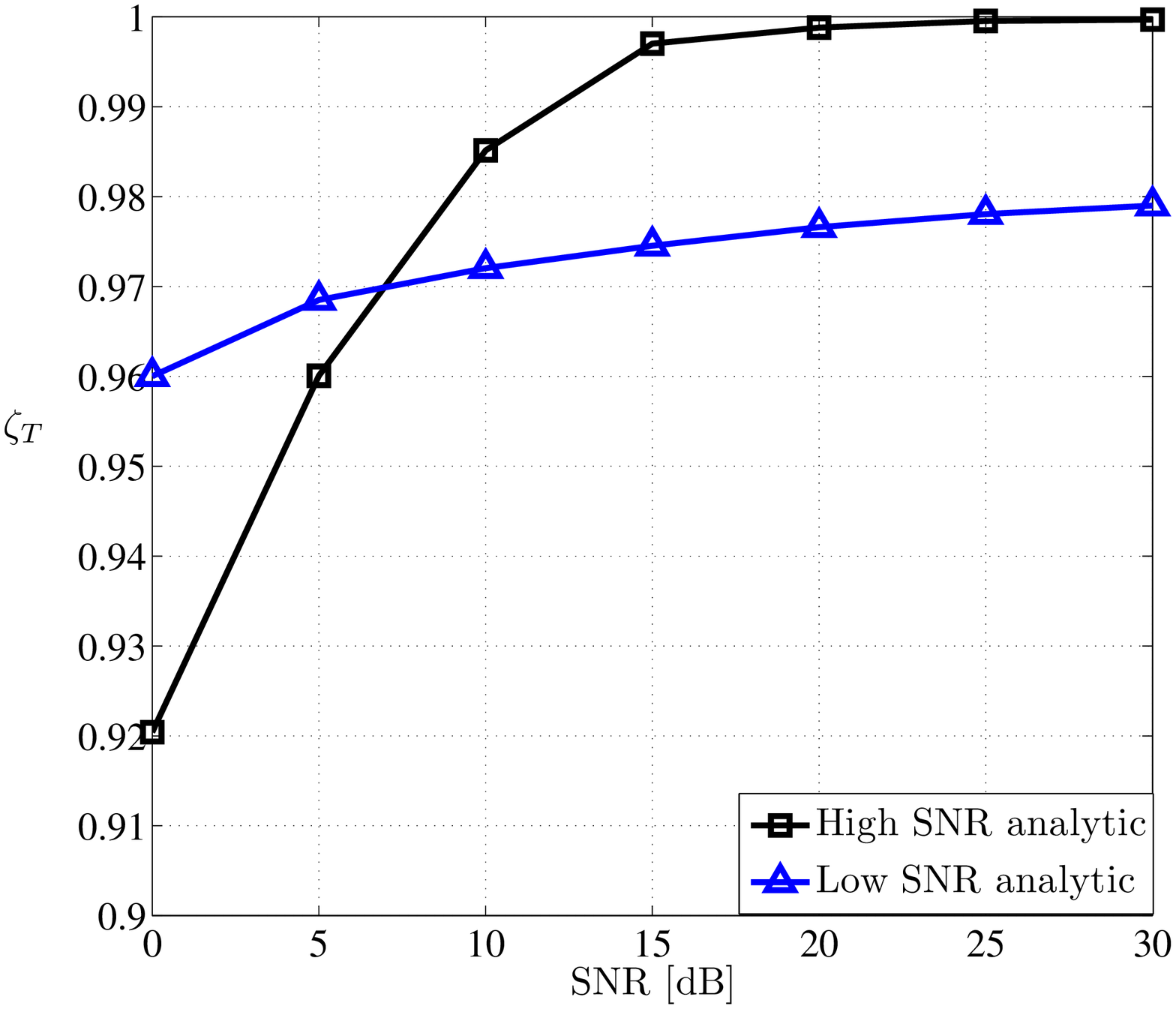}
	\caption{$\zeta_T$ for analytic and numerical parameters, TDD and $L=6$ antennas.}	
	\label{fig:TDD_L6}
\end{figure}
\begin{figure}[!h]
	\centering
	\includegraphics[width=\columnwidth]{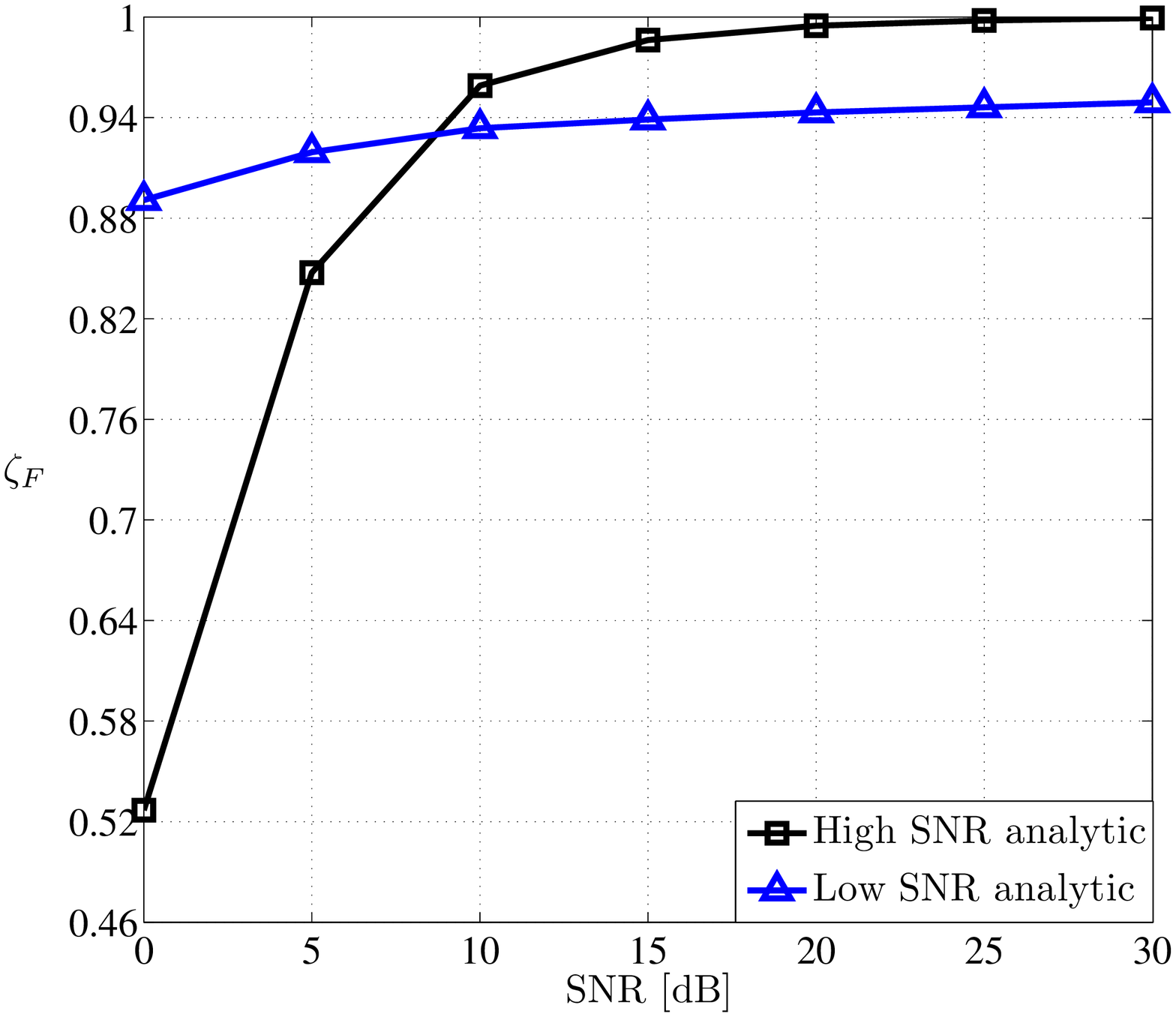}
	\caption{$\zeta_F$ for analytic and numerical parameters, FDD and $L=3$ antennas.}	
	\label{fig:FDD_L3}
\end{figure}
\begin{figure}[!h]
	\centering
	\includegraphics[width=\columnwidth]{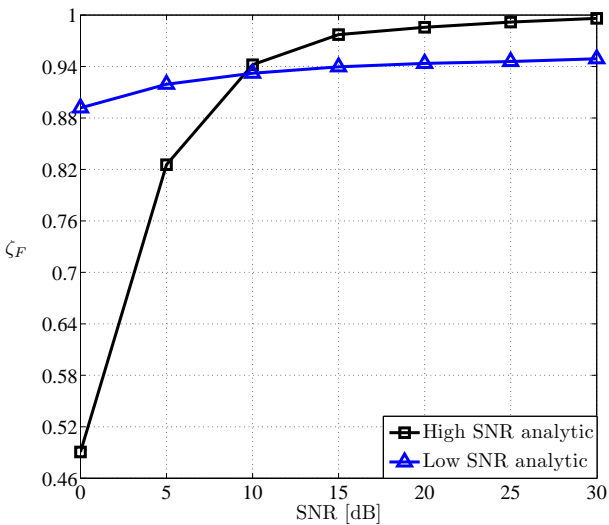}
	\caption{$\zeta_F$ for analytic and numerical parameters, FDD and $L=6$ antennas.}	
	\label{fig:FDD_L6}
\end{figure}
%
%
Quantitatively, if we focus on the best performer for each of the considered SNR values, the gap between $\zeta_T$ ($\zeta_F$ in the FDD case) and 1 is remarkably small. Thus, the accuracy of our derivations is confirmed.
If we focus on the impact of $L$ on the two analytic results, we note that they confirm the intuitions provided in Sec.~\ref{Sec: TDD scheme} and Sec.~\ref{Sec: FDD scheme}. However, the difference in terms of the best $\zeta_T$ (and $\zeta_F$) between the two antenna configurations is rather small. This shows that the impact of the number of antennas at the AP on the accuracy of the analytic results is not very significant.
Furthermore, we see that $\zeta_F\leq\zeta_T$, $\forall$ SNR$\in [0,30]$~dB and $\forall\,L \in\{3,6\}$. This is due to the two-step channel estimation process that is needed in the FDD scheme for the CSI acquisition at the AP. As a consequence,  a greater number of approximations is necessary. This reduces the accuracy of our closed-form representation of $\eta^\star_F$ and $\tau^\star_F$.\footnote{The interested reader may refer to Appendix-\ref{Apx: FDD eta over all channels} for further details.}

Focusing on the practical implementation, we note that the presence of the analytic results provides a twofold alternative for the AP, depending on the system intrinsic constraints. When the time available for the optimization of the transmit parameters is small, the analytic results could be used to achieve a performance which is reasonably close to the optimal, without resorting to an exhaustive search. Conversely, if more time is available for the AP, the analytic results can be used to improve the efficiency of the search for the optimal parameters. In this regard, we note that the downlink rate is a concave function of $(\eta,\tau)$. Accordingly, in this case, the local optimum coincides with the global optimum. Now, assume that the results of Lemma \ref{Lem: TDD eta over all channels} and Lemma \ref{Lem: FDD eta over all channels} were adopted as a starting point for finding the numerically optimal parameters, by means of an exhaustive search inside a smaller set. Then, the necessary time to identify the global optimum could be significantly reduced w.r.t. a ``blind'' exhaustive search, due to the proximity of the analytic results and the actual global optimum.

To conclude our analysis on the downlink rate, we investigate the advantages, if any, that the two duplexing schemes discussed so far can bring w.r.t. the non-CSI case in terms of the downlink rate. We remark that our goal is to characterize the performance of the system under the realistic assumptions made in Sec.~\ref{sec:introduction}. Thus, in the following study, all the system parameters discussed so far are set according to the analytic results derived in Sec.~\ref{sec:downlink}. Moreover, for simplicity in the representation, we let 
$R=R_T (\alpha_T,\hat{\eta}^\star_T)$ and
$R=R_F (\alpha_F,\hat{\eta}^\star_F,\hat{\tau}^\star_F)$
be the downlink rate for TDD and FDD, respectively, when the analytic results are adopted. The ratio between these rates and their counterpart for the non-CSI case (i.e., $\frac{R}{R_{NC}}$, with $R_{NC}$ as in \eqref{Eq: Non-Rate}) is represented in Fig.~\ref{fig:ergodic}, for SNR$\in [0,30]$~dB.
\begin{figure}[!h]
	\centering
	\includegraphics[width=\columnwidth]{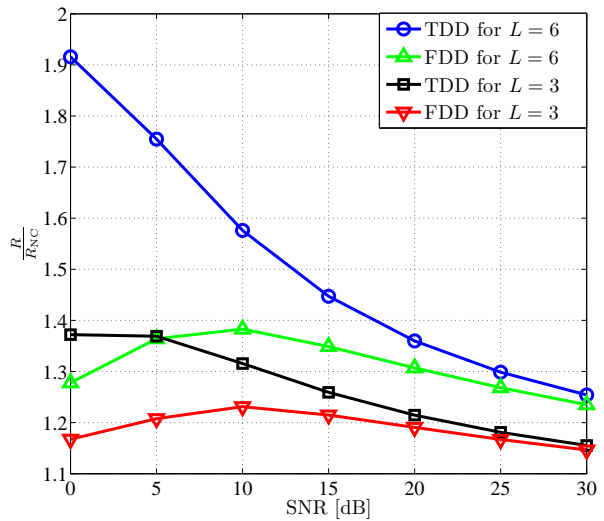}
	\caption{Ratio between the ergodic downlink rate for the CSI acquisition schemes and the non-CSI case.}	
	\label{fig:ergodic}
\end{figure}
Remarkably, both duplexing schemes clearly outperform the non-CSI approach in terms of downlink rate. This shows that, despite the penalties incurred to acquire the CSI, evident downlink rate enhancements are experienced by the AP, thanks to presence of the CSI, however imperfect the latter might be. 
The result in Fig.~\ref{fig:ergodic} is even more remarkable, considering that therein the two duplexing schemes always outperform the non-CSI approach, regardless of the antenna configuration and the SNR value. Furthermore, the largest advantage over the non-CSI performance is obtained in the low-to-mid SNR regime. In this regard, we first focus on the TDD case. In both cases, i.e., $L=3$ and $L=6$, $\frac{R}{R_{NC}}$ is a monotonically decreasing function of the SNR, confirming that the MFP performs better for low than for high SNR values \cite{conf:demig2008, TseWireless}. In particular, this shows that the availability of the CSI at the AP, albeit imperfect, is sufficient to achieve a much larger downlink rate as compared to the non-CSI approach. Furthermore, the performance for $L=6$ is strictly larger than for $L=3$, showing that, as in the case of traditional wireless communications, the SWIPT can effectively exploit the transmit diversity gain delivered by a MISO system as $L$ grows. Interestingly, the same is true for the FDD scheme. The CSI acquisition procedure in this case is more complex and prone to a higher uncertainty, especially at low SNR. This impacts the behavior of $\frac{R}{R_{NC}}$ that presents a maximum at SNR$=10$~dB, for both the considered antenna configurations. On one hand, the gain brought by the FDD scheme over the non-CSI approach is dominated by the power gain at the UT, brought by a more accurate beamformer design at the AP, for SNR$\leq10$~dB. On the other hand, the reduction of the multiplexing gain due to the increasing impact that both the channel estimation and feedback phases have on available time for information transfer, as the quality of the CSI increases, determines the decreasing behavior of $\frac{R}{R_{NC}}$ for SNR$>10$~dB. Finally, we note that the difference at high SNR between the values of $\frac{R}{R_{NC}}$ for TDD and FDD is very low, but increases with the $L$. In fact, when the SNR is high, the channel estimation/feedback phases are very short, thus the difference in the amount of time available for the information transfer in both cases is small. Nevertheless, a bigger $L$ entails a larger $\tau_F$ (thus $\alpha_F$) and, in turn, increases the difference between the values of $\frac{R}{R_{NC}}$ for TDD and FDD at high SNR as well.

\subsection{Outage Probability}
We switch our focus to the analysis of the energy shortage and the data outage probability. A set of Monte-Carlo simulations is performed to obtain the numerically computed probabilities. Subsequently, we set the values of $\eta_T$, $\eta_F$, and $\tau_F$ according to Lemma \ref{Lem: TDD eta over all channels} and Lemma \ref{Lem: FDD eta over all channels} and compute the exact value of both metrics by means of the analytic results in Sec.~\ref{Sec: Outage probability}. At this stage, we only consider the case $L=3$ owing to space economy. In the previous subsection, we verified that the impact of change in the number of antennas on the accuracy of the analytic results on the downlink rate is rather small. Accordingly, a robustness of the accuracy of our results to a change in the number of antennas could be conjectured. For the sake of clarity we let $p^{E,out}$ and $p^{D,out}$ be the energy shortage probability and the data outage probability when no energy shortage occurs, respectively. Furthermore, we let $R_{NC}=R_{T}=R_{F}=6$ (bit/s/Hz)\footnote{Referring to Sec.~\ref{sec:data outafe for non-CSI}, \ref{sec:data outafe for TDD}, and \ref{sec:data outafe for FDD}, we note that $R_{NC}$, $R_{T}$, and $R_{F}$ are specified values.} be the target rate for the considered system. Finally, we depict $p^{E,out}$ for SNR$\in[0,30]$~dB in Fig.~\ref{fig:non_energy_outage_L3}, Fig.~\ref{fig:TDD_energy_outage_L3}, and Fig.~\ref{fig:FDD_energy_outage_L3} and $p^{D,out}$ for SNR$\in[0,15]$~dB in Fig.~\ref{fig:all_data_outage_L3}. As shown in these figures, the numerical results perfectly match the analytic results derived in Sec.~\ref{Sec: Outage probability} for all the three schemes. This perfect match verifies the correctness of our derivations.
\begin{figure}[!h]
	\centering
	\includegraphics[width=\columnwidth]{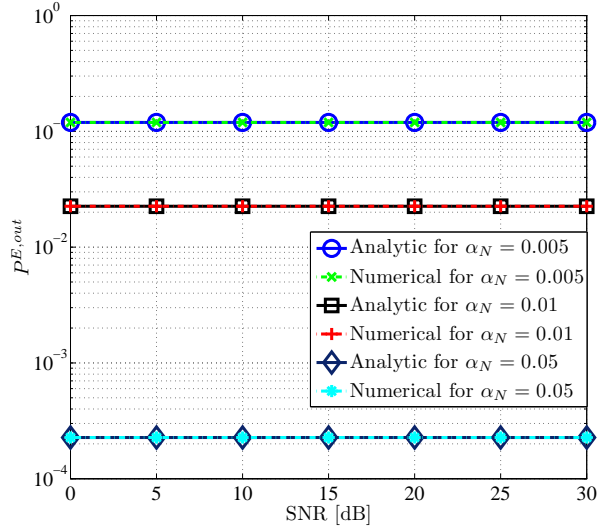}
	\caption{Energy shortage probability, non-CSI and $L=3$ antennas.}	
	\label{fig:non_energy_outage_L3}
\end{figure}
\begin{figure}[!h]
	\centering
	\includegraphics[width=\columnwidth]{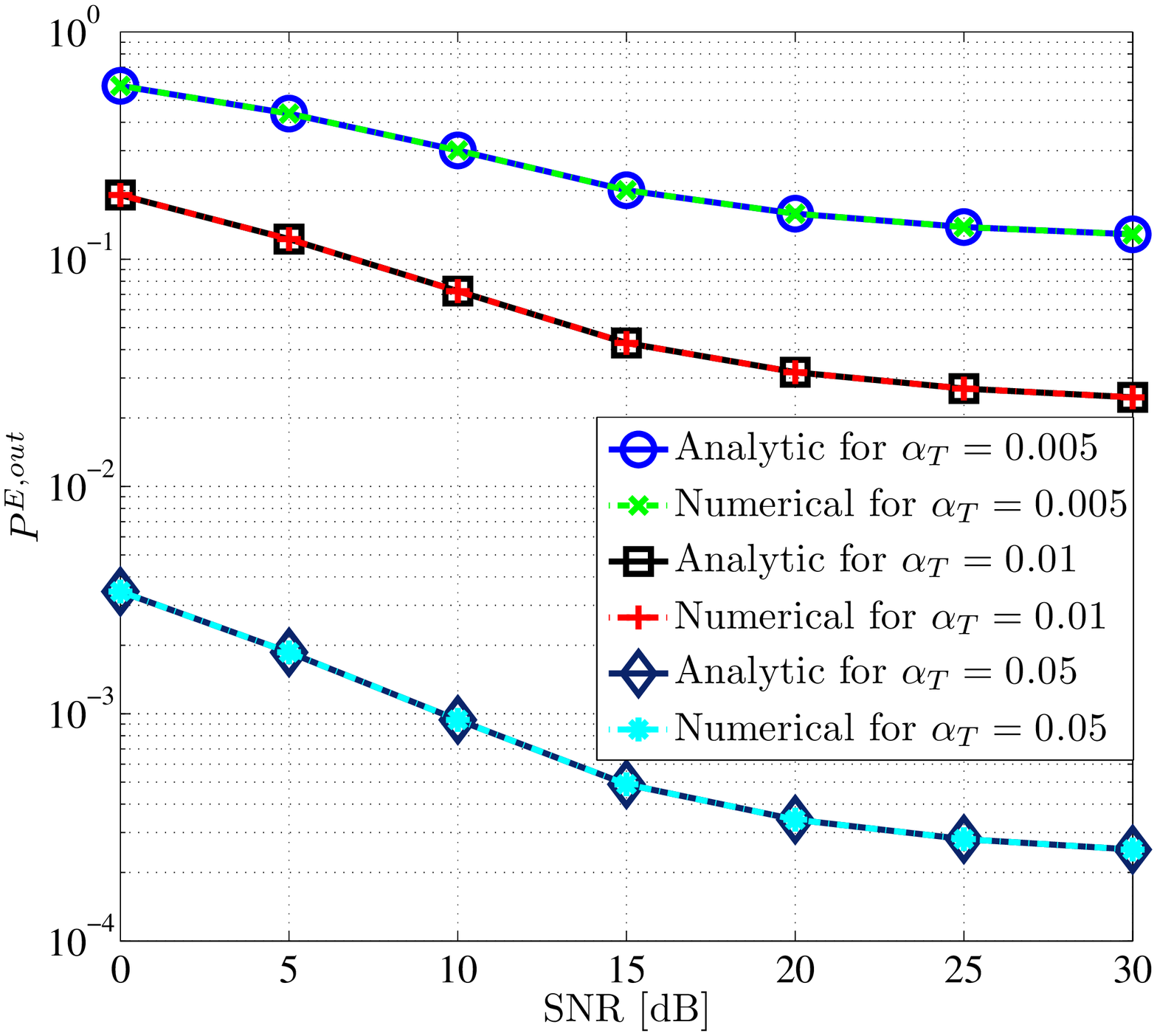}
	\caption{Energy shortage probability, TDD and $L=3$ antennas.}	
	\label{fig:TDD_energy_outage_L3}
\end{figure}
\begin{figure}[!h]
	\centering
	\includegraphics[width=\columnwidth]{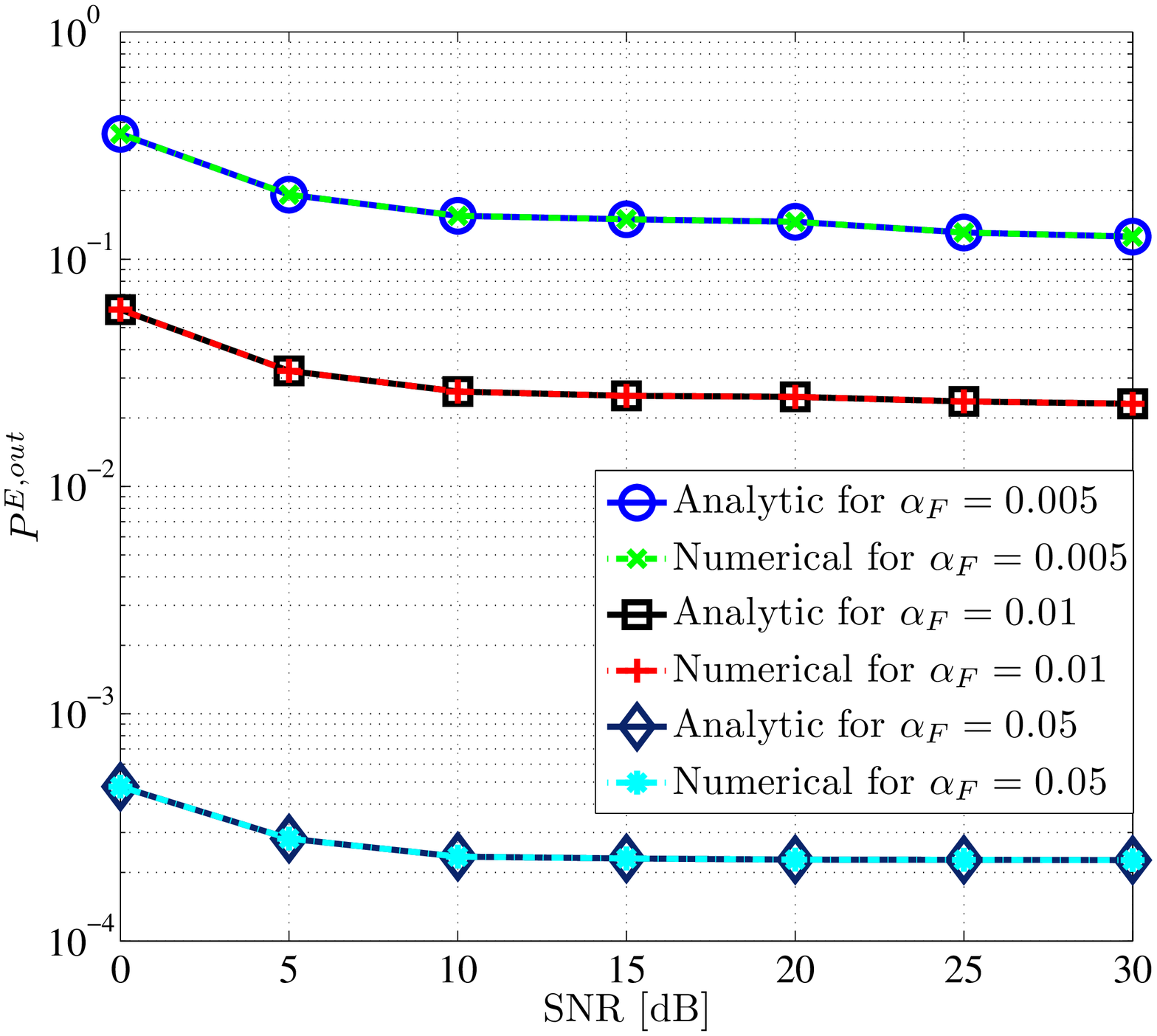}
	\caption{Energy shortage probability, FDD and $L=3$ antennas.}	
	\label{fig:FDD_energy_outage_L3}
\end{figure}
\begin{figure}[!h]
	\centering
	\includegraphics[width=\columnwidth]{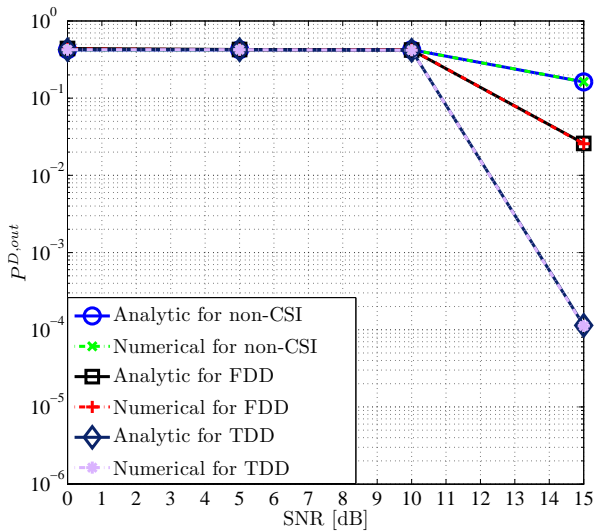}
	\caption{Data outage probability when no energy shortage occurs, $L=3$ antennas.}	
	\label{fig:all_data_outage_L3}
\end{figure}

We start by noting that the energy shortage probability strongly depends on the considered parameters. Thus, a comparison between schemes could have limited interested w.r.t. a comparison between the results obtained for each scheme, as the duration of the energy transfer phase varies. Accordingly, we restrain our focus to the latter aspect.  As expected, the energy shortage probability is independent of the SNR, regardless of the value of $\alpha_N$. However, for both the TDD and the FDD scheme, the energy shortage probability decreases with the SNR, regardless of the value of $\alpha_T$ and $\alpha_F$. In these cases, a larger SNR reduces the optimal time for both devices to perform the operations intrinsic to the CSI acquisition and achieve accurate channel estimations. In other words, the channel estimation accuracy increases with the SNR value, thus the CSI acquisition requires less time. Therefore, the energy consumption at the UT is lower when the SNR is large. Now, if the duration of the energy transfer phase is doubled or tenfold, a reduction of the energy shortage probability from almost one to three orders of magnitude is observed, depending on the considered scheme. In practice, if the coherence time is long enough, even a rather small increase of the duration of the energy transfer phase can positively impact the energy shortage probability.

We now switch our focus to the data outage probability illustrated in Fig.~\ref{fig:all_data_outage_L3}. We start by noting that, to compute the numerical data outage probability in this case, the duration of the energy transfer phase for the three considered schemes, i.e., $\alpha_N$, $\alpha_T$, and $\alpha_F$, is chosen at each iteration of the simulations such that the harvested energy at the UT is sufficient to perform the receiver operations intrinsic to each scheme.  As a matter of fact, the obtained quantitative results for a study of this kind are not extremely relevant, in fact they clearly depend on the selected target rate. In practice, their qualitative behavior is definitely more interesting. In this regard, the lowest data outage probability is experienced by the considered system in the case of the TDD scheme. This could have been expected after our findings on the downlink rate in the previous section, in which the TDD scheme resulted as the best performer out of the three considered cases. 

\section{Conclusion}\label{Sec: Conclusions}

In this work, we have examined the efficacy of SWIPT in a MISO system consisting of an AP and a single UT. In particular, the latter is not equipped with any local power source, but instead harvests the necessary energy for its operations from the received RF signals. The performance of the considered system has been analyzed under realistic and practically relevant system assumptions. Three practical cases have been considered: a) absence of CSI at the AP, b) imperfect CSI at the AP acquired by means of pilots estimation (TDD), c) imperfect CSI at both the UT and the AP acquired by means of analog CSI feedback in the uplink (FDD). We have compared the considered scenarios by means of three performance metrics of interest, i.e, the ergodic downlink rate, the energy shortage probability, and the data outage probability. Accordingly, we have derived closed-form expressions for each metric, and for the ergodically optimal duration of both the WPT and the channel training/feedback phases, to maximize the downlink rate in all the three scenarios. The accuracy of our derivations has been verified by an extensive numerical analysis. First, it is worth noting that TDD has consistently been the best performer for each considered metric, confirming the potential of this duplexing scheme for the future advancements in modern networks. More specifically, concerning the downlink rate, our findings show that CSI knowledge at the AP is always beneficial for the information transfer in SWIPT systems, despite the resources devoted to the channel estimation/feedback procedures and the presence of estimation errors. In a follow-up of this work, we will study both strategies to maximize the efficiency of the WPT, in the case of the availability of CSI knowledge at the AP prior to the WPT phase (or part of it), and their impact on the energy shortage probability. Additional subject of future investigation will be the extension of the considered set-up to a multi-user scenario.

\appendices
\section{Proof of Lemma \ref{Lem: TDD eta over all channels}}
\label{Apx: TDD eta over all channels}
To evaluate \eqref{Eq: SecIII-TDD-rate optimal problem},  we use the law of iterated expectations, i.e., 
$\mathbb{E}_{\mathbf{h},\mathbf{\bar{w}}}\left[\cdot\right]=\mathbb{E}_{\mathbf{h}}\left[\mathbb{E}_{\mathbf{\bar{w}}}\left[\cdot|\mathbf{h}\right]\right]$. Furthermore, we neglect $LP_D$ in \eqref{Eq: SecIII-TDD-rate optimal problem} since, in practice, $P_D\ll P$ generally \cite{rpt:3gpp.36.814}. We proceed by first computing the following expression: 
\begin{align}
\left(1-\eta_T-\frac{\eta_T LP_E}{\beta P\lVert\mathbf{h}\rVert^2}\right)\mathbb{E}_{\mathbf{\bar{w}}}\left[\log_2\left(1+\frac{P\lvert\mathbf{h}^{\dag}\mathbf{\hat{h}}\rvert^2}{N_0\lVert\mathbf{\hat{h}}\rVert^2}\right)\right],\label{Eq: App1-Optimize rate}
\end{align}
for a given channel realization $\mathbf{h}$.
In order to compute \eqref{Eq: App1-Optimize rate}, the following straightforward results can be derived:
\begin{align}
\lvert\mathbf{h}^\dag\hat{\mathbf{h}}\rvert^2=\frac{N_0\lVert\mathbf{h}\rVert^2}{2\eta_T T_CP_E}\Psi_1 \text{ and }
\lVert\mathbf{\hat{h}}\rVert^2=\frac{N_0}{2\eta_T T_CP_E}\Psi_2 \label{eqn:papa101},
\end{align}
where $\Psi_1\sim\chi_2^{'2}\left(\frac{2\eta_T T_CP_E\lVert\mathbf{h}\rVert^2}{N_0}\right)$ and $\Psi_2\sim\chi_{2L}^{'2}\left(\frac{2\eta_T T_CP_E\lVert\mathbf{h}\rVert^2}{N_0}\right)$. We break up the subsequent analysis into two cases, namely, the high SNR and low SNR cases.

First, we consider the analysis at high SNR. In this case, applying the approximation, $\log_2(1+SNR)\approx\log_2SNR$ when $SNR\gg 0$, and \eqref{eqn:papa101} to \eqref{Eq: App1-Optimize rate}, we can derive
\begin{align}
\hspace{-1.3em}\mathbb{E}_{\mathbf{\bar{w}}}\left[\log_2\left(1+\frac{P\lvert\mathbf{h}^{\dag}\mathbf{\hat{h}}\rvert^2}{N_0\lVert\mathbf{\hat{h}}\rVert^2}\right)\right]\approx\mathbb{E}_{\mathbf{\bar{w}}}\left[\log_2\left(\frac{P\lVert\mathbf{h}\rVert^2\Psi_1}{N_0\Psi_2}\right)\right].\label{Eq: App1-High SNR rate approximation-2}
\end{align}
Subsequently, using the Taylor series expansion of $\log_2\Psi_1$ and $\log_2\Psi_2$ 
at their respectively mean values (i.e. $2+\lambda$ and $2L+\lambda$ respectively, where $\lambda=\frac{2\eta_T T_CP_E\lVert\mathbf{h}\rVert^2}{N_0}$), we have
\begin{align}
&\eqref{Eq: App1-High SNR rate approximation-2}= \log_2\frac{P\lVert\mathbf{h}\rVert^2}{N_0}+\log_2e \notag
\\ &\times\mathbb{E}_{\Psi_1,\Psi_2}\Bigg[\ln\left(2+\lambda\right)+\frac{\Psi_1-2-\lambda}{2+\lambda}-\frac{\left(\Psi_1-2-\lambda\right)^2}{2\left(2+\lambda\right)^2}+\cdots\notag
\\&-\ln\left(2L+\lambda\right)-\frac{\Psi_2-2L-\lambda}{2L+\lambda}+\frac{\left(\Psi_2-2L-\lambda\right)^2}{2\left(2L+\lambda\right)^2}-\cdots\Bigg]\label{Eq: TDD eta high SNR Taylor series-1}
\\&=\log_2\frac{P\lVert\mathbf{h}\rVert^2}{N_0}+\log_2e\Bigg(\ln\left(2+\lambda\right)-\frac{\left(2+2\lambda\right)}{\left(2+\lambda\right)^2}+\cdots\notag
\\&-\ln\left(2L+\lambda\right)+\frac{\left(2L+2\lambda\right)}{\left(2L+\lambda\right)^2}-\cdots\Bigg)\label{Eq: TDD eta high SNR Taylor series-1-1}
\\&\stackrel{(a)}{\approx}\log_2\frac{P\lVert\mathbf{h}\rVert^2\lambda}{N_0\left(2L+\lambda\right)}
=\log_2\frac{P\lVert\mathbf{h}\rVert^2}{N_0}-\log_2\left(1+\frac{2L}{\lambda}\right),\label{Eq: Apx2-Approximated rate at high SNR-2}
\end{align}
where $(a)$ in \eqref{Eq: Apx2-Approximated rate at high SNR-2} is derived by noting that $\lambda$ is large (at high SNR), and hence we can neglect the higher order fractional terms in \eqref{Eq: TDD eta high SNR Taylor series-1-1}. 
Moreover,  $2$ in the $\ln(2+\lambda)$ term is neglected owing to $\lambda \gg 2$. 
Further, we use \eqref{Eq: Apx2-Approximated rate at high SNR-2} in \eqref{Eq: App1-Optimize rate} and take the expectation over $\mathbf{h}$.
Moreover, since $L/\lambda$ is small at high SNR, we use the approximation, $\log_2\left(1+x \right)\approx x\log_2e$ when $x\approx0$, in the derivation. By some straightforward computations, we can rewrite \eqref{Eq: SecIII-TDD-rate optimal problem} as
\begin{align}
\eta_T^{\star}\approx\operatorname{argmax}_{\eta_T}
B_2-\eta_TB_1-\frac{N_0L\log_2e}{\eta_TT_CP_E\left(L-1\right)},\label{Eq: Apx2-Approximated rate at high SNR-4}
\end{align}
where
$B_1=\mathbb{E}_{\mathbf{h}}\left[\left(1+\frac{LP_E}{\beta P\lVert\mathbf{h}\rVert^2}
\right)\log_2\left(\frac{P\lVert\mathbf{h}\rVert^2}
{N_0}\right)\right]$, and $B_2$ is a constant value. 
In order to derive $\eta_T^{\star}$,
we differentiate \eqref{Eq: Apx2-Approximated rate at high SNR-4} with respect $\eta_T$ and set it equal to $0$. By some straightforward computations, we can derive  the result \eqref{Eq: TDD_opt_high_SNR}.

We now move to the analysis at low SNR. Before deriving, we first note that $\mathbb{E}[X]=\mathbb{E}[\exp(\ln X)]$. Subsequently,
using the approximation, $\log_2\left(1+SNR \right)\approx SNR\log_2e$ when $SNR\approx0$, and Jensen's inequality, i.e., $\mathbb{E}[\exp(\ln X)]\geq\exp(\mathbb{E}[\ln X])$, we have
\begin{align}
\eqref{Eq: App1-Optimize rate}\geq\left(1-\eta_T-\frac{\eta_T LP_E}{\beta P\lVert\mathbf{h}\rVert^2}\right)\frac{P\log_2e}{N_0}&\notag
\\\times \exp\left(\mathbb{E}_{\mathbf{\bar{w}}}\left[\ln\left(\frac{\lvert\mathbf{h}^{\dag}\mathbf{\hat{h}}\rvert^2}{\lVert\mathbf{\hat{h}}\rVert^2}\right)\right]\right)&.\label{Eq: TDD eta low SNR Jensen's inequality}
\end{align}
Once again, we apply Taylor series expansion to $\mathbb{E}_{\mathbf{\bar{w}}}\left[\ln\left(\frac{\lvert\mathbf{h}^{\dag}\mathbf{\hat{h}}\rvert^2}{\lVert\mathbf{\hat{h}}\rVert^2}\right)\right]$, i.e., steps \eqref{Eq: App1-High SNR rate approximation-2}, \eqref{Eq: TDD eta high SNR Taylor series-1}, \eqref{Eq: TDD eta high SNR Taylor series-1-1}, and \eqref{Eq: Apx2-Approximated rate at high SNR-2}. In this case, since $\lambda$ is small at low SNR, we approximate the higher order terms, such as $\frac{2+2\lambda}{(2+\lambda)^2}$ in \eqref{Eq: TDD eta high SNR Taylor series-1-1}, by
a constant value $\kappa_1$. Using this, we can rewrite \eqref{Eq: TDD eta low SNR Jensen's inequality} as
\begin{align}  
\frac{\kappa_1\left(1-\eta_T-\frac{\eta_T LP_E}{\beta P\lVert\mathbf{h}\rVert^2}\right)P\lVert\mathbf{h}\rVert^2\log_2e}{N_0}\times\frac{2+\lambda}{2L+\lambda}.\label{Eq: TDD eta low SNR constant fraction}
\end{align}
Finally, we take the expectation over $\mathbf{h}$. By using Jensen's inequality (similar approaches in \eqref{Eq: TDD eta low SNR Jensen's inequality}) and applying Taylor series expansion 
for the logarithm term
at the mean value $\lVert\mathbf{h}\rVert^2$ (similar approaches in \eqref{Eq: TDD eta high SNR Taylor series-1}, \eqref{Eq: TDD eta high SNR Taylor series-1-1}, and \eqref{Eq: Apx2-Approximated rate at high SNR-2}), we rewrite \eqref{Eq: SecIII-TDD-rate optimal problem} as
\begin{align}
\eta_T^{\star}\approx\operatorname{argmax}_{\eta_T}
\frac{\kappa_2\left(1+\frac{\eta_TT_CP_EL}{N_0}\right)\left(L-\eta_TL-\frac{\eta_T LP_E}{\beta P}\right)}{1+\frac{\eta_TT_CP_E}{N_0}}\notag
\end{align}
where $\kappa_2$ is a constant. 
In order to derive $\eta_T^{\star}$,
we differentiate the above formula 
with respect $\eta_T$ and set it equal to $0$. By some straightforward computations, we can derive the result \eqref{Eq: TDD_opt_low_SNR}.


\section{Proof of Lemma \ref{Lem: FDD eta over all channels}}\label{Apx: FDD eta over all channels}
To evaluate \eqref{Eq: SecIII-FDD-rate optimal problem}, we use the similar approaches as in \eqref{Eq: App1-Optimize rate}. Hence, we first compute
\begin{align}
\mathbb{E}_{\mathbf{\hat{w}}}\Bigg[\left(1-\tau_F-\frac{\tau_FP_F\lVert\mathbf{\hat{h}}_{UT}\rVert^2}{\beta P\lVert\mathbf{h}\rVert^2}-\eta_F\right)&\notag
\\\times\log_2\left(1+\frac{P\lvert\mathbf{h}^{\dag}\mathbf{\hat{h}}_{AP}\rvert^2}{N_0\lVert\mathbf{\hat{h}}_{AP}\rVert^2}\right)&\Bigg].\label{Eq: App2-Optimize rate}
\end{align}
To compute \eqref{Eq: App2-Optimize rate},
the following results can be derived (details omitted for lack of space):
\begin{align}
\lvert\mathbf{h}^{\dag}\mathbf{\hat{h}}_{AP}\rvert^2&=\frac{N_0L\lVert\mathbf{h}\rVert^2}{2T_C}\left(\frac{1}{\eta_F P}+\frac{1}{\tau_F P_F}\right)\Phi_1,\label{Eq: App2-FDD distribution-1}
\\\lVert\mathbf{\hat{h}}_{AP}\rVert^2&=\frac{N_0L}{2T_C}\left(\frac{1}{\eta_F P}+\frac{1}{\tau_F P_F}\right)\Phi_2,\label{Eq: App2-FDD distribution-2}
\\\lVert\mathbf{\hat{h}}_{UT}\rVert^2&=\frac{N_0L}{2\eta_FT_C P}\Phi_3,\label{Eq: App2-FDD distribution-3}
\end{align}
where $\Phi_1\sim\chi_{2}^{'2}\left(\frac{2T_C\lVert\mathbf{h}\rVert^2}{N_0L\left(\frac{1}{\eta_F P}+\frac{1}{\tau_F P_F}\right)}\right)$, $\Phi_2\sim\chi_{2L}^{'2}\left(\frac{2T_C\lVert\mathbf{h}\rVert^2}{N_0L\left(\frac{1}{\eta_F P}+\frac{1}{\tau_F P_F}\right)}\right)$, and $\Phi_3\sim\chi^{'2}_{2L}\left(\frac{2\eta_FT_C P\lVert\mathbf{h}\rVert^2}{N_0L}\right)$. 

Before proceeding, we note that the pre-log term and the term inside the logarithm in \eqref{Eq: App2-Optimize rate} are correlated, due to the presence of $\mathbf{\hat{w}}_{UT}$ in both terms. However, at high SNR, the variance of $\mathbf{\hat{w}}_{UT}$ will be small. Thus, we assume that the pre-log term and term inside the logarithm are approximately independent (and hence we can take the expectations of these two terms in \eqref{Eq: App2-Optimize rate} separately). For the term inside the logarithm, using \eqref{Eq: App2-FDD distribution-1}, \eqref{Eq: App2-FDD distribution-2}, and \eqref{Eq: App2-FDD distribution-3} and by Taylor series expansion (similar approaches in \eqref{Eq: TDD eta high SNR Taylor series-1}, \eqref{Eq: TDD eta high SNR Taylor series-1-1}, and \eqref{Eq: Apx2-Approximated rate at high SNR-2}), we obtain
\begin{align}
\mathbb{E}_{\mathbf{\hat{w}}}\left[\log_2\left(1+\frac{P\lvert\mathbf{h}^{\dag}\mathbf{\hat{h}}_{AP}\rvert^2}{N_0\lVert\mathbf{\hat{h}}_{AP}\rVert^2}\right)\right]
\approx
\log_2\frac{P\lVert\mathbf{h}\rVert^2\lambda_1}{N_0\left(2L+\lambda_1\right)},\label{Eq: FDD eta high SNR neglect fraction-2}
\end{align}
where $\lambda_1=\frac{2T_C\lVert\mathbf{h}\rVert^2}{N_0L\left(\frac{1}{\eta_F P}+\frac{1}{\tau_F P_F}\right)}$.
For the pre-log term, we have
\begin{align}
&\mathbb{E}_{\mathbf{\hat{w}}}\left[1-\tau_F-\frac{\tau_F P_F\lVert\mathbf{\hat{h}}_{UT}\rVert^2}{\beta P\lVert\mathbf{h}\rVert^2}-\eta_F\right]\approx1-\tau_F-\frac{\tau_F P_F}{\beta P}-\eta_F.
\label{Eq: FDD eta high SNR neglect fraction-1}
\end{align}
The approximation in \eqref{Eq: FDD eta high SNR neglect fraction-1} is derived using $\mathbb{E}_{\mathbf{\hat{w}}}[\lVert\mathbf{\hat{h}}_{UT}\rVert^2]
=\lVert\mathbf{h}\rVert^2+\frac{N_0L^2}{\eta_FT_C P}\approx\lVert\mathbf{h}\rVert^2$
(since at high SNR, $\frac{N_0L^2}{\eta_F PT_C}$ is small enough to be neglected).
Finally, we substitute \eqref{Eq: FDD eta high SNR neglect fraction-2} and \eqref{Eq: FDD eta high SNR neglect fraction-1} into \eqref{Eq: App2-Optimize rate} and take the expectation over $\mathbf{h}$. 
Using the approximation $\log_2\left(1+x \right)\approx x\log_2e$ when $x\approx0$ and by some straightforward computations, we can rewrite \eqref{Eq: SecIII-FDD-rate optimal problem} as
\begin{align}
\left(\eta_F^{\star},\tau_F^{\star}\right)\approx\operatorname{argmax}_{\eta_F,\tau_F}\left(1-\tau_F-\frac{\tau_FP_F}{\beta P}-\eta_F\right)&\notag
\\\times\left(B_5-\frac{N_0L^2\log_2e}{T_C\left(L-1\right)}\left(\frac{1}{\eta_F P}+\frac{1}{\tau_F P_F}\right)\right)&,\label{Eq: Apx4-Objective function at high SNR-3}
\end{align}
where $B_5=\mathbb{E}_\mathbf{h}\left[\log_2\left(\frac{P\lVert\mathbf{h}\rVert^2}{N_0}\right)\right]$. 
In order to find $\eta_F^{\star}$ and $\tau_F^{\star}$, we differentiate \eqref{Eq: Apx4-Objective function at high SNR-3} with respect $\eta_F$ and $\tau_F$ and equate them to $0$. By some straightforward computations, we can derive  the results \eqref{Eq: FDD_opt_high_SNR-1} and \eqref{Eq: FDD_opt_high_SNR-2}.

We now turn to the analysis at low SNR. Herein, we follow the similar derivations as in Appendix-\ref{Apx: TDD eta over all channels}. First we use the same approaches as in \eqref{Eq: TDD eta low SNR Jensen's inequality}.
Further, we apply \eqref{Eq: App2-FDD distribution-1}, \eqref{Eq: App2-FDD distribution-2}, and \eqref{Eq: App2-FDD distribution-3} and use the Taylor series expansion (similar to the approaches in \eqref{Eq: TDD eta low SNR constant fraction}).
Thus, we can derive
\begin{align}
\eqref{Eq: App2-Optimize rate} &\geq
\left(\left(1-\eta_F-\tau_F\right)\beta P\lVert\mathbf{h}\rVert^2-\frac{\tau_F P_FN_0L}{2\eta_FT_CP}\left(2L+\lambda_2\right)\right)\notag
\\&\hspace{11em}\times\frac{\kappa_3 \log_2e\left(2+\lambda_1\right)}{N_0\beta \left(2L+\lambda_1\right)},
\label{Eq: FDD eta low SNR Taylor series-2}
\end{align}
where $\lambda_2=\frac{2\eta_F T_CP\lVert\mathbf{h}\rVert^2}{N_0L}$, and $\kappa_3$ is a constant value.
Finally, as before we take the expectation over $\mathbf{h}$. By using Jensen's inequality (similar approaches in \eqref{Eq: TDD eta low SNR Jensen's inequality}) and applying Taylor series expansion of the logarithm term at the mean value $\lVert\mathbf{h}\rVert^2$ (similar approaches in \eqref{Eq: TDD eta high SNR Taylor series-1}, \eqref{Eq: TDD eta high SNR Taylor series-1-1}, and \eqref{Eq: Apx2-Approximated rate at high SNR-2}), we can rewrite the expected value of \eqref{Eq: FDD eta low SNR Taylor series-2} over $\mathbf{h}$ as
\begin{align}
&\frac{\kappa_4\left(1-\eta_F-\tau_F-\frac{\tau_F P_F}{\beta P}
-\frac{\tau_F P_FN_0L}{\eta_FT_C\beta P^2}\right)}{\left(L+\frac{T_C}{N_0\left(\frac{1}{\eta_F P}+\frac{1}{\tau_F P_F}\right)}\right)\left(1+\frac{T_C}{N_0\left(\frac{1}{\eta_F P}+\frac{1}{\tau_F P_F}\right)}\right)^{-1}},\label{Eq: Apx4-Low SNR approximation-4}
\end{align}
where $\kappa_4$ is a constant value. Now, we differentiate \eqref{Eq: Apx4-Low SNR approximation-4} with respect $\eta_F$ and $\tau_F$ and equate them to $0$. Using some straightforward computations, we obtain the result \eqref{Eq: FDD_opt_low_SNR-1}.
When $x\approx 0$, $\sqrt{1+x}\approx 1+\frac{x}{2}$. By utilizing this approximation in \eqref{Eq: FDD_opt_low_SNR-1} since $N_0$ is large at low SNR, we can derive $\tau^{\star}_F\approx\frac{\left(\eta^{\star}_F\right)^2T_C\beta P^2}{P_FN_0L}$.
Substituting the latter approximated $\tau^{\star}_F$ into the differentiated equation $\frac{\partial\eqref{Eq: Apx4-Low SNR approximation-4}}{\partial \eta_F}=0$, we can also derive the result \eqref{Eq: FDD_opt_low_SNR-2}.

\section{Proof of Lemma \ref{Lem: TDD data outage probability}}\label{Apx: TDD data outage probability}

We now proceed with the proof. In equation \eqref{Eq: TDD-Data outage}, we have two random variables, i.e., $\left|\frac{\mathbf{h}^{\dag}\mathbf{\hat{h}}}{\lVert\mathbf{\hat{h}}\rVert}\right|^2$ and $\lVert\mathbf{h}\rVert^2$. 
To evaluate \eqref{Eq: TDD-Data outage}, we first express
$\lVert\mathbf{h}\rVert^2$ as the sum of $\left|\frac{\mathbf{h}^{\dag}\mathbf{\hat{h}}}{\lVert\mathbf{\hat{h}}\rVert}\right|^2$ and another independent random variable (see the steps below). This step simplifies the evaluation as shown in the following.
In order to do so, first we project the row vector $\mathbf{h}^{\dag}$ onto an orthonormal set of vectors $\Omega = \left\{\frac{\mathbf{\hat{h}}^{\dag}}{\lVert\mathbf{\hat{h}}\rVert},\mathbf{g}_2^{\dag},\cdots,\mathbf{g}_L^{\dag}\right\}$, 
where $\mathbf{g}_2^{\dag},\cdots,\mathbf{g}_L^{\dag}$ are chosen arbitrarily 
such that the vectors in $\Omega $ span the complex $L$ dimensional space. 
Recall that $\mathbf{h}^{\dag}|_{\mathbf{\hat{h}}}\sim\mathcal{C}\mathcal{N}\left(\frac{1}{1+b_6}\mathbf{\hat{h}}^{\dag},\frac{1}{b_5}\mathbf{I}_L\right)$. 
Since the distribution of a Complex Gaussian random vector (with distribution $\mathcal{C}\mathcal{N}\left(\mathbf{0},\mathbf{I}\right)$) projected onto an orthonormal set remains
unchanged \cite{TseWireless}, we can conclude that
%
%
$ \frac{\mathbf{h}^{\dag}\mathbf{\hat{h}}}{\lVert\mathbf{\hat{h}}\rVert}\Big|_{\mathbf{\hat{h}}}\sim\mathcal{C}\mathcal{N}\left(\frac{\lVert\mathbf{\hat{h}}\rVert}{\left(1+b_6\right)},b_5^{-1}\right)$ and 
$\mathbf{h}^{\dag}\mathbf{g}_l\big|_{\mathbf{\hat{h}}}\sim\mathcal{C}\mathcal{N}\left(0,b_5^{-1}\right),
\ l = 2,\dots,L.$
%
%
Additionally, the $\mathbf{h}^{\dag}\mathbf{g}_l\big|_{\mathbf{\hat{h}}} \ l = 2,\dots,L$  are independent random variables.
Thus we can conclude the following:
\begin{align}
\left|\frac{\mathbf{h}^{\dag}\mathbf{\hat{h}}}{\lVert\mathbf{\hat{h}}\rVert}\right|^2\Bigg|_{\mathbf{\hat{h}}}&=\frac{1}{2b_5}\Theta_1,\label{Eq: App3-chi-square-1}
\\\left(\left|\mathbf{h}^{\dag}\mathbf{g}_2\right|^2+\cdots+\left|\mathbf{h}^{\dag}\mathbf{g}_L\right|^2\right)\Big|_{\mathbf{\hat{h}}}&=\frac{1}{2b_5}\Theta_2,\label{Eq: App3-chi-square-2}
\end{align}
where $\Theta_1\sim\chi^{'2}_2\left(\frac{2}{b_5b_6^2}\lVert\mathbf{\hat{h}}\rVert^2\right)$, and $\Theta_2\sim\chi^{2}_{2L-2}$. Furthermore, $\Theta_1$ and $\Theta_2$ are independent and $\Theta_1+\Theta_2=2b_5\lVert\mathbf{h}\rVert^2$. 
Now, applying the same approach as in \eqref{Eq: App4-Outage probability-1} to \eqref{Eq: TDD-Data outage} and using \eqref{Eq: App3-chi-square-1} and \eqref{Eq: App3-chi-square-2}, we can derive the following:
$\mathcal{P}_{T}^{D, out}=\mathbb{E}_\mathbf{\hat{h}}\left[\Pr\left\{
\Theta_1<2b_5b_3,\Theta_1+\Theta_2\geq 2b_5b_4
\right\}\right].$
Let us focus on the relationship between $b_3$ and $b_4$ and consider two possible cases, i.e., $b_3<b_4$ and $b_3\geq b_4.$ We start from the former. In this case, denoting the PDF of $\Theta_i$  as $f_{\Theta_i}\left(\theta_i \right)$ for $i\in \{1,2\}$,
we have,
\begin{align}
&\mathcal{P}_{T}^{D, out}\notag
\\&=\mathbb{E}_\mathbf{\hat{h}}\left[\int_{\theta_1 = 0}^{2b_5b_3}\frac{\Gamma\left(L-1,b_5 b_4-\frac{\theta_1}{2}\right)I_0\left(\sqrt{\frac{2\lVert\mathbf{\hat{h}}\rVert^2\theta_1}{b_5b_6^2}}\right)}{2\Gamma\left(L-1\right)e^{\left(\frac{\theta_1}{2}+\frac{\lVert\mathbf{\hat{h}}\rVert^2}{b_5b_6^2}\right)}}
\mathrm{d}\theta_1\right].\label{Eq: App3-TDD data outage a<b-1}
\end{align}
Since $\mathbf{\hat{h}}\sim\mathcal{C}\mathcal{N}\left(\mathbf{0},\left(1+b_6\right)\mathbf{I}_L\right)$, it follows that $\lVert\mathbf{\hat{h}}\rVert^2=\frac{\left(1+b_6\right)}{2}\Theta_3$, where $\Theta_3\sim\chi^2_{2L}$. Substituting the PDF of $\lVert\mathbf{\hat{h}}\rVert^2$ into \eqref{Eq: App3-TDD data outage a<b-1}, we obtain our result \eqref{Eq: TDD data outage probability closed form 1}.
In the second case $b_3\geq b_4$,
following the same approaches as in \eqref{Eq: App3-TDD data outage a<b-1}, we obtain our result \eqref{Eq: TDD data outage probability closed form 2} and conclude the proof.

\section{Proof of Lemma \ref{Lem: FDD data outage probability}}\label{Apx: FDD data outage probability}

Adopting a similar approach to what has been adopted in \eqref{Eq: App4-Outage probability-1} and denoting the conditional distribution of $\mathbf{\hat{h}}_{UT}$ given $\mathbf{\hat{h}}_{AP}$ as 
$f\left(\mathbf{\hat{h}}_{UT}\big|\mathbf{\hat{h}}_{AP}\right)$, the analytic expression for the outage probability, i.e., $\mathcal{P}_{F}^{D, out}$ in \eqref{Eq: FDD-Data outage-1}, can be written as
\begin{align}
&\mathbb{E}_{\mathbf{\hat{h}}_{AP}}\bigg[\int\Pr\Bigg\{
\left|\frac{\mathbf{h}^{\dag}\mathbf{\hat{h}}_{AP}}{\lVert\mathbf{\hat{h}}_{AP}\rVert}\right|^2<b_7,\lVert\mathbf{h}\rVert^2\geq b_8\notag
\\&\quad\quad+b_9\lVert\mathbf{\hat{h}}_{UT}\rVert^2\Big|\mathbf{\hat{h}}_{AP},\mathbf{\hat{h}}_{UT}\Big\}
f\left(\mathbf{\hat{h}}_{UT}\big|\mathbf{\hat{h}}_{AP}\right)\mathrm{d}\mathbf{\hat{h}}_{UT}\Big].\label{Eq: Proof of data outage in FDD-1}
\end{align}
To compute \eqref{Eq: Proof of data outage in FDD-1}, we use the projection approach that we have used in Appendix-\ref{Apx: TDD data outage probability}. First, we project the row vector $\mathbf{h}^{\dag}$ onto an orthonormal set of vector $\left\{\frac{\mathbf{\hat{h}}^{\dag}_{AP}}{\lVert\mathbf{\hat{h}}_{AP}\rVert},\mathbf{g}^{\dag}_2,\cdots,\mathbf{g}_L^{\dag}\right\}$, (such that
these vectors span the $L$ dimensional complex space). 
Since $\mathbf{h}|_{\mathbf{\hat{h}}_{AP},\mathbf{\hat{h}}_{UT}}\sim\mathcal{C}\mathcal{N}\left(\frac{1}{1+\sigma_3}\mathbf{\hat{h}}_{UT},\frac{1}{\sigma_2}\mathbf{I}_L\right)$, following the same approach as in Appendix-\ref{Apx: TDD data outage probability}, we obtain 
\begin{align}
\left|\frac{\mathbf{h}^{\dag}\mathbf{\hat{h}}_{AP}}{\lVert\mathbf{\hat{h}}_{AP}\rVert}\right|^2\Bigg|_{\mathbf{\hat{h}}_{AP},\mathbf{\hat{h}}_{UT}}&=\frac{1}{2\sigma_2}\Theta_5,\label{Eq: App-5-chi square-1}
\\\left(\left|\mathbf{h}^{\dag}\mathbf{g}_2\right|^2+\cdots\left|\mathbf{h}^{\dag}\mathbf{g}_L\right|^2\right)\Big|_{\mathbf{\hat{h}}_{AP},\mathbf{\hat{h}}_{UT}}&=\frac{1}{2\sigma_2}\Theta_6,\label{Eq: App-5-chi square-2}
\end{align}
where $\Theta_5\sim\chi^{'2}_2\left(\frac{2}{\sigma_2\sigma_3^2}\left|\frac{\mathbf{\hat{h}}_{UT}^{\dag}\mathbf{\hat{h}}_{AP}}{\lVert\mathbf{\hat{h}}_{AP}\rVert}\right|^2\right)$ and $\Theta_6\sim\chi^{'2}_{2L-2}\left(\frac{2}{\sigma_2\sigma_3^2}\left(\lVert\mathbf{\hat{h}}_{UT}\rVert^2-\left|\frac{\mathbf{\hat{h}}_{UT}^{\dag}\mathbf{\hat{h}}_{AP}}{\lVert\mathbf{\hat{h}}_{AP}\rVert}\right|^2\right)\right)$. Moreover, $\Theta_5$ and $\Theta_6$ are independent and $\Theta_5+\Theta_6=2\sigma_2\lVert\mathbf{h}\rVert^2$. Using \eqref{Eq: App-5-chi square-1} and \eqref{Eq: App-5-chi square-2}, we rewrite \eqref{Eq: Proof of data outage in FDD-1} as
\begin{align}
&\mathbb{E}_{\mathbf{\hat{h}}_{AP}}\bigg[\int\Pr\Big\{
\Theta_5+\Theta_6\geq 2\sigma_2 \left(b_8+b_9\lVert\mathbf{\hat{h}}_{UT}\rVert^2\right),\notag
\\&\hspace{7em}\Theta_5<2\sigma_2 b_7
\Big\}
f\left(\mathbf{\hat{h}}_{UT}\big|\mathbf{\hat{h}}_{AP}\right)\mathrm{d}\mathbf{\hat{h}}_{UT}\Big] \label{Eq: Proof of data outage in FDD-2}.
\end{align}
To compute the double integration over $\theta_5$ and $\theta_6$ in \eqref{Eq: Proof of data outage in FDD-2}, we have two two cases, i.e., $b_7<b_8+b_9\lVert\mathbf{\hat{h}}_{UT}\rVert^2$ and $b_7\geq b_8+b_9\lVert\mathbf{\hat{h}}_{UT}\rVert^2$. In the first case, the probability term in \eqref{Eq: Proof of data outage in FDD-2} can be computed as
\begin{align}
&\int_{\theta_5=0}^{2\sigma_2 b_7}\frac{1}{2}Q_{L-1}\left(\sqrt{\frac{2}{\sigma_2\sigma_3^2}\left(\lVert\mathbf{\hat{h}}_{UT}\rVert^2-\left|\frac{\mathbf{\hat{h}}_{UT}^{\dag}\mathbf{\hat{h}}_{AP}}{\lVert\mathbf{\hat{h}}_{AP}\rVert}\right|^2\right)},\right.\notag
\\&\hspace{0.25em}\left.\sqrt{2\sigma_2\left(b_8+b_9\lVert\mathbf{\hat{h}}_{UT}\rVert^2\right)-\theta_5}\right)
I_0\left(\sqrt{\frac{2\theta_5}{\sigma_2\sigma_3^2}\left|\frac{\mathbf{\hat{h}}_{UT}^{\dag}\mathbf{\hat{h}}_{AP}}{\lVert\mathbf{\hat{h}}_{AP}\rVert}\right|^2}\right)\notag
\\&\quad\times e^{-\left(\frac{\theta_5}{2}+\frac{1}{\sigma_2\sigma_3^2}\left|\frac{\mathbf{\hat{h}}_{UT}^{\dag}\mathbf{\hat{h}}_{AP}}{\lVert\mathbf{\hat{h}}_{AP}\rVert}\right|^2\right)}\mathrm{d}\theta_5.\label{Eq: Proof of data outage in FDD-3}
\end{align}
Since $\mathbf{\hat{h}}_{UT}|_{\mathbf{\hat{h}}_{AP}}\sim\mathcal{C}\mathcal{N}\left(\frac{\sigma_5}{\sigma_4}\mathbf{\hat{h}}_{AP},\sigma_5\mathbf{I}_L\right)$, using the projection approach once again, we can derive 
\begin{align}
\left|\frac{\mathbf{\hat{h}}_{UT}^{\dag}\mathbf{\hat{h}}_{AP}}{\lVert\mathbf{h}_{AP}\rVert}\right|^2\Bigg|_{\mathbf{\hat{h}}_{AP}}&=\frac{\sigma_5}{2}\Theta_7,\label{Eq: App-5-chi square-3}
\\\left(\lVert\mathbf{\hat{h}}_{UT}\rVert^2-\left|\frac{\mathbf{\hat{h}}_{UT}^{\dag}\mathbf{\hat{h}}_{AP}}{\lVert\mathbf{h}_{AP}\rVert}\right|^2\right)\Bigg|_{\mathbf{\hat{h}}_{AP}}&=\frac{\sigma_5}{2}\Theta_8,\label{Eq: App-5-chi square-4}
\end{align}
where $\Theta_7\sim\chi^{'2}_2\left(\frac{2\sigma_5}{\sigma_4^2}\lVert\mathbf{\hat{h}}_{AP}\rVert^2\right)$ and $\Theta_8\sim\chi^2_{2L-2}$. Furthermore, $\Theta_7$ and $\Theta_8$ are independent and $\Theta_7+\Theta_8=\frac{2}{\sigma_5}\lVert\mathbf{\hat{h}}_{UT}\rVert^2$. 
Lastly, we note that since $b_7<b_8+b_9\lVert\mathbf{\hat{h}}_{UT}\rVert^2$, we have 
$b_7<(b_8+b_9) \frac{(\Theta_7+\Theta_8)\sigma}{2}$ 
and hence $\Theta_7+\Theta_8>\frac{2\left(b_7-b_8\right)}{b_9\sigma_5}.$
Now, applying \eqref{Eq: Proof of data outage in FDD-3}, \eqref{Eq: App-5-chi square-3}, and \eqref{Eq: App-5-chi square-4} to \eqref{Eq: Proof of data outage in FDD-2}, the integral of \eqref{Eq: Proof of data outage in FDD-2} can be computed as
\begin{align}
&\hspace{-0.5em}\int_{\theta_7+\theta_8>\frac{2\left(b_7-b_8\right)}{b_9\sigma_5}}\int_{\theta_5 = 0}^{2\sigma_2 b_7}
I_0\left(\sqrt{\frac{2\sigma_5\theta_7\lVert\mathbf{\hat{h}}_{AP}\rVert^2}{\sigma_4^2}}\right)\frac{\theta_8^{L-2}}{2^{L+1}}\notag
\\&\hspace{-0.5em}\times Q_{L-1}\left(\sqrt{\frac{\theta_8\sigma_5}{\sigma_2\sigma_3^2}},
\sqrt{2\sigma_2\left(b_8+\frac{b_9\left(\theta_7+\theta_8\right)\sigma_5}{2}\right)-\theta_5}\right)\notag
\\&\hspace{-0.5em}\times\frac{I_0\left(\sqrt{\frac{\theta_5\theta_7\sigma_5}{\sigma_2\sigma_3^2}}\right)}{\Gamma\left(L-1\right)e^{\left(\frac{\theta_5+\theta_7+\theta_8}{2}+\frac{\theta_7\sigma_5}{2\sigma_2\sigma_3^2}+\frac{\sigma_5\lVert\mathbf{\hat{h}}_{AP}\rVert^2}{\sigma_4^2}\right)}}\mathrm{d}\theta_5\mathrm{d}\theta_7\mathrm{d}\theta_8.\label{Eq: Proof of data outage in FDD-4}
\end{align}
We note that, since $\mathbf{\hat{h}}_{AP}\sim\mathcal{C}\mathcal{N}\left(\mathbf{0},\left(1+\sigma_3+\sigma_4\right)\mathbf{I}_L\right)$, we have that $\lVert\mathbf{\hat{h}}_{AP}\rVert^2=\frac{1+\sigma_3+\sigma_4}{2}\Theta_9$, where $\Theta_9$ is $\chi^2_{2L}$. Using this fact and \eqref{Eq: Proof of data outage in FDD-4} in \eqref{Eq: Proof of data outage in FDD-2}, we obtain \eqref{Eq: FDD-Data outage closed form-1} in Lemma \ref{Lem: FDD data outage probability}.

We now consider the second aforementioned case, i.e., $b_7\geq b_8+b_9\lVert\mathbf{\hat{h}}_{UT}\rVert^2$. Following similar steps as in the first case, we obtain \eqref{Eq: FDD-Data outage closed form-2} and \eqref{Eq: FDD-Data outage closed form-3} in Lemma \ref{Lem: FDD data outage probability} (the detailed steps have been omitted for matters of space economy). At this stage, the outage probability in \eqref{Eq: FDD-Data outage-1} is obtained as the sum of \eqref{Eq: FDD-Data outage closed form-1}, \eqref{Eq: FDD-Data outage closed form-2}, and \eqref{Eq: FDD-Data outage closed form-3}, and this concludes the proof.

\bibliographystyle{IEEEtran}
\bibliography{reference}

\end{document}